\newcommand{\mb}[1]{\mathbb{#1}}
\newcommand{\mc}[1]{\mathcal{#1}}
\newcommand{\Implies}{\Rightarrow}
\renewcommand{\implies}{\rightarrow}
\newcommand{\A}{\mc{A}}
\newcommand{\B}{\mc{B}}
\newcommand{\T}{\mc{T}}
\newcommand{\R}{\mc{R}}
\newcommand{\G}{\mc{G}}
\newcommand{\M}{\mc{M}}
\newcommand{\I}{\mc{I}}
\renewcommand{\O}{\mc{O}}
\newcommand{\tuple}[1]{(#1)}
\newcommand{\tupleE}[1]{\left\langle #1 \right\rangle}
\title{Bounded Synthesis of Reactive Programs\thanks{Some of the results of the paper are part of the first author's Bachelor thesis~\cite{2017:Gerstacker}.
    Supported by the European Research Council (ERC) Grant OSARES (No. 683300) and by the Saarbrücken Graduate School of Computer Science.}}
\author{Carsten Gerstacker \and Felix Klein \and Bernd Finkbeiner}
\institute{Reactive Systems Group, Saarland University, Germany
  \email{\{gerstacker,fklein,finkbeiner\}@cs.uni-saarland.de}}
\begin{document}

\maketitle

\begin{abstract}
  Most algorithms for the synthesis of reactive systems focus on the construction of
  finite-state machines rather than actual programs. This often leads to badly
  structured, unreadable code. In this paper, we present a bounded
  synthesis approach that automatically constructs, from a given
  specification in linear-time temporal logic (LTL), a program in
  Madhusudan's simple imperative language for reactive programs. We
  develop and compare two principal approaches for the reduction of the
  synthesis problem to a Boolean constraint satisfaction problem. The
  first reduction is based on a generalization of bounded synthesis to
  two-way alternating automata, the second reduction is based on a direct
  encoding of the program syntax in the constraint system.
  We report on preliminary experience with a prototype implementation,
  which indicates that the direct encoding outperforms the
  automata approach.
\end{abstract}

\section{Introduction}
In reactive synthesis, we automatically construct a reactive system,
such as the controller of a cyberphysical system, that is guaranteed
to satisfy a given specification. The study of the synthesis problem,
known also as Church's problem~\cite{Church/57/Applications}, dates back to the 1950s
and has, especially in recent years, attracted a lot
of attention from both theory and practice. There is a growing number
of both tools (cf. \cite{Jobstm07c,DBLP:conf/tacas/Ehlers11,DBLP:conf/cav/BohyBFJR12,DBLP:conf/cav/FaymonvilleFT17}) and success stories, such as the synthesis
of an arbiter for the AMBA AHB bus, an open industrial standard for
the on-chip communication and management of functional blocks in
system-on-a-chip (SoC) designs~\cite{Bloem+others/07/Automatic}.

The practical use of the synthesis tools has, however, so far been limited.
A serious criticism is that, compared to code produced by a human programmer, the code produced by
the currently available synthesis tools is usually badly structured and, quite simply,
unreadable. The reason is that the synthesis tools do not actually
synthesize \emph{programs}, but rather much simpler computational
models, such as \emph{finite state machines}.  As a result, the
synthesized code lacks control structures, such as \emph{while}
loops, and symbolic operations on program \emph{variables}:
everything is flattened out into a huge state graph.

A significant step towards better implementations has been the
\emph{bounded synthesis}~\cite{2013:Schewe:BS} approach, where the number of states
of the synthesized implementation is bounded by a constant.
This can be used to construct finite state machines with a
\emph{minimal} number of states. Bounded synthesis has also been extended with
other structural measures, such as the number of
cycles~\cite{2016:Klein:BCS}.  Bounded synthesis reduces the synthesis
problem to a constraint satisfaction problem: the existence of an
implementation of bounded size is expressed as a set of Boolean
constraints, which can subsequently be solved by a SAT or QBF solver~\cite{DBLP:conf/tacas/FaymonvilleFRT17}.
Bounded synthesis has proven highly effective in finding finite state
machines with a \emph{simple} structure.  However, existing methods based on bounded synthesis do not
make use of syntactical program constructs like loops or variables. 
The situation is different in the synthesis of sequential programs,
where programs have long been studied as the target of synthesis
algorithms~\cite{conf/fmcad/AlurBJMRSSSTU13,conf/popl/Gulwani11,osera2015type,solarLezama13,vechevYY13}.
In particular, in \emph{syntax-guided
synthesis}~\cite{conf/fmcad/AlurBJMRSSSTU13}, the output of the
synthesis algorithm is constrained to programs whose syntax conforms
to a given grammar.  A first theoretical step in this direction for
reactive systems was proposed by
Madhusudan~\cite{2011:Madhusudan:SRP}. Madhusudan defines a small
imperative programming language and shows that the existence of a
program in this language with a fixed set of Boolean variables is
decidable.  For this purpose, the specification is translated into an alternating
two-way tree automaton that reads in the syntax tree of a program,
simulates its behavior, and accepts all programs whose behavior
satisfies the specification. Because the set of variables is fixed in
advance, the approach can be used to synthesize programs with a
minimal number of variables. However, unlike bounded synthesis, this
does not lead to programs that are minimial in other ways, such as the
number of states or cycles.

In this paper, we present the first bounded synthesis approach
for reactive programs. As in standard bounded synthesis~\cite{2013:Schewe:BS}, we
reduce the synthesis problem to a constraint satisfaction problem.
The challenge is to find a constraint system that encodes the
existence of a program that satisfies the specification, and that, at the same time, can be solved
efficiently. We develop and compare two principal methods.
The first method is inspired by Madhusudan's construction in that we
also build a two-way tree automaton that recognizes the correct
programs. The key difficulty here is that the standard bounded synthesis
approach does not work with two-way automata, let alone the
alternating two-way automata produced in Madhusudan's construction.
We first give a new automata construction that produces universal, instead of
alternating, two-way automata. We then generalize bounded synthesis to work
on arbitrary graphs, including the run graphs of two-way automata.
The second method follows the original bounded synthesis approach
more closely. Rather than simulating the execution of the program
in the automaton, we encode the existence of both the program and
its run graph in the constraint system. The correctness of the
synthesized program is ensured, as in the original approach, with
a universal (one-way) automaton derived from the specification.
Both methods allow us to compute programs that satisfy the given specification and that are minimal in measures such as the size of the program.
The two approaches compute the exact same reactive programs, but differ,
conceptually, in how much work is done via an automata-theoretic
construction vs. in the constraint solving. In the first approach,
the verification of the synthesized program is done by the automaton,
in the second approach by the constraint solving. Which approach is better?
While no method has a clear theoretical advantage over the other, our experiments with a
prototype implementation indicate a strong advantage for the second approach.

\section{Preliminaries}
We denote the Boolean values $\{0,1\}$ by $\mb{B}$. The set of non-negative integers is denoted by $\mb{N}$ and for $a\in \mb{N}$ the set $\{0,1,\hdots,a\}$ is denoted by $[a]$. An \textit{alphabet} $\Sigma$ is a non-empty finite set of symbols. The elements of an alphabet are called letters. A \textit{infinite word} $\alpha$ over an alphabet $\Sigma$ is a infinite concatenation $\alpha=\alpha_0\alpha_1\hdots$ of letters of $\Sigma$. The set of infinite words is denoted by $\Sigma^\omega$. With $\alpha_n\in\Sigma$ we access the $n$-th letter of the word. For an infinite word $\alpha\in\Sigma^\omega$ we define with Inf$(\alpha)$ the set of states that appear infinitely often in $\alpha$. A subset of $\Sigma^\omega$ is a \textit{language over infinite words}.

\subsection{Implementations}
\textit{Implementations} are arbitrary input-deterministic reactive systems. We fix the finite input and output alphabet $\I$ and $\O$, respectively. A \textit{Mealy machine} is a tuple $\M = (\I,\O,M,m_0,\tau, o)$ where $\I$ is an input-alphabet, $\O$ is an output-alphabet, $M$ is a finite set of states, $m_0 \in M$ is an initial state, $\tau: M \times 2^\I \rightarrow M $ is a transition function and $o: M \times 2^\I \rightarrow 2^\O$ is an output function. A \textit{system path} over an infinite input sequence $\alpha^\I$ is the sequence $m_0m_1\hdots\in M^\omega$ such that $\forall i\in\mb{N}:\tau(m_i,\alpha^\I_i)=m_{i+1}$. The thereby produced infinite output sequence is defined as $\alpha^\O=\alpha^\O_0\alpha^\O_1\hdots\in(2^\O)^\omega$, where every element has to match the output function, i.e., $\forall i\in\mb{N}: \alpha^\O_i=o(m_i,\alpha^\I_i)$. We say a Mealy machine $\M$ produces a word $\alpha=(\alpha_0^\I\cup\alpha_0^\O)(\alpha_1^\I\cup\alpha_1^\O)\hdots\in (2^{\I\cup\O})^\omega$, iff the output $\alpha^\O$ is produced for input $\alpha^\I$. We refer to the set of all producible words as the language of $\M$, denoted by $\mc{L}(\M)\subseteq(2^{\I\cup\O})^\omega$.

A more succinct representation of implementations are \textit{programs}. The programs we are working with are imperative reactive programs over a fixed set of Boolean variables $B$ and fixed input/output aritys $N_\I$/$N_\O$. Our approach builds upon \cite{2011:Madhusudan:SRP} and we use the same syntax and semantics. Let $b\in B$ be a variable and both $\vec{b_\mc{I}}$ and $\vec{b_\mc{O}}$ be vectors over multiple variables of size $N_\mc{I}$ and $N_\mc{O}$, respectively. The syntax is defined with the following grammar
\begin{center}
\begin{tabular}{lcl}
	$\tupleE{stmt}$ & ::= & $\tupleE{stmt};\tupleE{stmt} \mid$ \textbf{skip} $\mid$ b := $\tupleE{expr}$ $\mid$ \textbf{input} $\vec{b_\mc{I}}$ $\mid$ \textbf{output} $\vec{b_\mc{O}}$ \\
	& $\mid$ & \textbf{if}($\tupleE{expr}$)\textbf{ then} \{$\tupleE{stmt}$\}\textbf{ else }\{$\tupleE{stmt}$\} $\mid$ \textbf{while}($\tupleE{expr}$)\{$\tupleE{stmt}$\}\\
	$\tupleE{expr}$ & ::= & b $\mid$ \textbf{tt} $\mid$ \textbf{ff} $\mid$ ($\tupleE{expr}\vee\tupleE{expr})$ $\mid$ ($\neg\tupleE{expr}$)
\end{tabular}
\end{center}
The semantics are the natural one. Our programs start with an initial variable valuation we define to be $0$ for all variables. The program then interacts with the environment by the means of input and output statements, i.e., for a vector over Boolean variables $\vec{b}$ the statement \enquote{\textbf{input} $\vec{b}$} takes an input in $\{ 0,1 \}^{N_\I}$ from the environment and updates the values of $\vec{b}$. The statement \enquote{\textbf{output} $\vec{b}$} outputs the values stored in $\vec{b}$, that is an output in $\{0,1\}^{N_\O}$. Therefor a program with input/output arity $N_\I/N_\O$ requires at least $max(N_\I,N_\O)$ many variables, i.e., $|B|\geq max(N_\I,N_\O)$. Between two input and output statements the program can internally do any number of steps and manipulate the variables using assignments, conditionals and loops. Note that programs are input-deterministic, i.e., a program maps an infinite input sequence $\alpha^\I\in (\{0,1\}^{N_\I})^\omega$ to an infinite output sequence $\alpha^\O\in (\{0,1\}^{N_\O})^\omega$ and we say a program can produce a word $\alpha=(\alpha^\I_0\alpha^\O_0)(\alpha^\I_1\alpha^\O_1)\hdots\in (\{0,1\}^{N_\I+N_\O})^\omega$, iff it maps $\alpha^\I$ to $\alpha^\O$. We define the language of $\T$, denoted by $\mc{L}(\T)$, as the set of all producible words. We assume programs to alternate between input and output statements.

\begin{figure}[t]
  \centering
  \vspace{-1em}
\quad \begin{minipage}{0.3\textwidth}
		\begin{lstlisting}[mathescape=true]
while(tt) {
  input (r$_1$, r$_2$);
  if(r$_1$) then {
    r$_2$ = ff
  } else {
    skip
  };
  output (r$_1$, r$_2$)
}
		\end{lstlisting}
		\caption{Example-Code}\label{arbiter-tree-code}
\end{minipage}
\begin{minipage}{0.50\textwidth}
	\centering
	\resizebox{0.50\textwidth}{!}{
		\begin{tikzpicture}[->,>=stealth',shorten >=1pt,auto,node distance=2.8cm,thick]
					\tikzstyle{every state}=[rectangle,rounded corners,fill=none,draw=black,text=black,thick,initial text=,inner sep=5pt]
					
					\node[initial,state] 	(A)                		{\Large \textbf{while}};
					\node[state]         	(B) [below left of=A] 	{\Large \textbf{tt}};
					\node[state]			(C) [below right of=A] 	{\Large \textbf{;}};
					\node[state]			(D) [below left of=C]   {\Large \textbf{input} $r_1r_2$};
					\node[state]			(K) [below right of=C]  {\Large \textbf{;}};
					\node[state]			(L) [below right of=K]  {\Large \textbf{output } $r_1r_2$};
					\node[state]			(E) [below left of=K]	{\Large \textbf{if}};
					\node[state]			(F) [below left of=E]   {\Large $r_1$};
					\node[state]			(G) [below right of=E]	{\Large \textbf{then}};
					\node[state]			(H) [below left of=G]   {\Large assign$_{r_2}$};
					\node[state]			(I) [below left of=H]	{\Large \textbf{ff}};
					\node[state]			(J) [below right of=G]  {\Large \textbf{skip}};
					
					\path 
					(A) edge (B)
					(A) edge (C)
					(C) edge (D)
					(C) edge (K)
					(E) edge (F)
					(E) edge (G)
					(G) edge (H)
					(H) edge (I)
					(G) edge (J)
					(K) edge (E)
					(K) edge (L);
					\end{tikzpicture}
	}	
	\caption{Example-Program-Tree}\label{arbiter-tree}
      \end{minipage}
\end{figure}

We represent our programs as $\Sigma$-\textit{labeled binary trees}, i.e., a tuple $\tuple{T, \tau}$ where $T\subseteq\{L,R\}^*$ is a finite and prefix closed set of nodes and $\tau: T \rightarrow \Sigma$ is a labeling function. Based on the defined syntax, we fix the set of labels as 
\begin{center}
	$\Sigma_P = \{ \neg, \vee,;,\text{\textbf{if}},\text{\textbf{then}},\text{\textbf{while}} \} \cup B \cup \{ assign_b \mid b \in B \}$\\ $\cup \{ \text{\textbf{input} } \vec{b} \mid \vec{b} \in B^{N_\I} \} \cup \{ \text{\textbf{output} } \vec{b} \mid \vec{b} \in B^{N_\O} \}$.
\end{center}
We refer to $\Sigma_P$-labeled binary trees as \textit{program trees}. If a node has only one subtree we define it to be a the left subtree. Note that our program trees do therefore not contain nodes with only a right subtree. For example, \cref{arbiter-tree-code} depicts an arbitrary program and \cref{arbiter-tree} the corresponding program tree.

We express the current variable valuation as a function $s: B \rightarrow \mathbb{B}$. We update variables $\vec{b}\in B^n$ with new values $\vec{v}\in \mathbb{B}^n$ using the following notation:
\begin{equation*}
s[\vec{b}/\vec{v}](x) = 
\begin{cases}
v_i &\quad \text{if } b_i = x, \text{for all } i\\
s(x) &\quad \text{otherwise}
\end{cases}
\end{equation*}

\subsection{Automata}
We define \textit{alternating automata over infinite words} as usual, that is a tuple $A=(\Sigma, Q, q_0, \delta, Acc)$ where $\Sigma$ is a finite alphabet, $Q$ is a finite set of states, $q_0 \in Q$ is an initial state, $\delta : Q \times \Sigma \rightarrow \mb{B}^+(Q)$ is a transition function and $Acc \subseteq Q^\omega$ is an acceptance condition.

The \textit{Büchi acceptance condition} $\text{BÜCHI}(F)$ on a set of states $F\subseteq Q$ is defined as $\text{BÜCHI}(F)=\big\{q_0q_1\hdots\in Q^\omega\mid \text{Inf}(\alpha) \cap F \not = \emptyset\big\}$ and $F$ is called the set of accepting states.
The \textit{co-Büchi acceptance condition} $\text{COBÜCHI}(F)$ on a set of states $F\subseteq Q$ is defined as
$\text{COBÜCHI}(F)=\big\{q_0q_1\hdots\in Q^\omega\mid \text{Inf}(\alpha) \cap F = \emptyset\big\}$, where $F$ is called the set of rejecting states.
To express combinations of Büchi and co-Büchi expressions we use the \textit{Streett acceptance condition}. Formally, $\text{STREETT}\big(F\big)$ on a set of tuples $F=\{(A_i,G_i)\}_{i\in[k]}\subseteq Q\times Q$ is defined as
$\text{STREETT}(F)=\big\{q_0q_1\hdots\in Q^\omega\mid \forall i\in[k]:\text{Inf}(\alpha) \cap A_i \not = \emptyset \implies \text{Inf}(\alpha) \cap G_i \not = \emptyset\big\}$. A run with a Streett condition is intuitively accepted, iff for all tuples $(A_i,G_i)$, the set $A_i$ is hit only finitely often or the set $G_i$ is hit infinitely often.

Two-way alternating tree automata are tuple $(\Sigma, P, p_0, \delta _L, \delta _R, \delta _{LR}, \delta _{\emptyset}, Acc)$, where $\Sigma$ is an input alphabet, $P$ is a finite set of states, $p_0 \in P$ is an initial state, $Acc$ is an acceptance condition, and $\delta$ are transition functions of type \\
$\delta_S : P \times \Sigma \times (S \cup \{D\}) \rightarrow \mathbb{B}^+(P \times (S \cup \{U\}))$, for $S \in \{L, R, LR, \emptyset\}$.
We introduce $\mu: T \times \{L,R,U\} \rightarrow T \times \{L,R,D\}$ as a function to map states and directions to move in, to the reached states and the matching incoming directions.
\begin{center}
	\begin{tabular}{llcll}
		$\mu(t,L)$ &$= (t\cdot L, D)$&\mbox{\qquad\qquad}&$\mu(t.L,U)$ &$= (t, L)$\\
		$\mu(t,R)$ &$= (t\cdot R, D)$&&$\mu(t.R,U)$ &$= (t, R)$\\
	\end{tabular}
\end{center}

We consider specifications given in linear time-temporal logic (\textit{LTL}). Such specifications can be translated into non-deterministic Büchi automata or dually into an universal co-Büchi automata as shown in \cite{1994:Vardi:IC}.
For an arbitrary specification we denote by $A_{spec}$ and $\overline{A}_{spec}$ the corresponding non-deterministic Büchi and universal co-Büchi automaton, respectively.

\section{Automata Construction}
We have already argued that programs, as a more succinct representation of implementations, are highly desirable. However, in contrast to Mealy machines, which only dependent on the current state and map an input to a corresponding output, in programs such a direct mapping is not possible. Instead, programs need to be simulated, variables to be altered, expressions to be evaluated and an output statement to be traversed until we produce the corresponding output to the received input. These steps not only depend on the current position in the program but additionally also on the valuation of all variables. 

We build upon Madhusudans reactive program synthesis approach \cite{2011:Madhusudan:SRP} were program synthesis is solved by means of two-way alternating Büchi tree automata walking up and down over program trees while keeping track of the current valuation and the state of a given Büchi specification automaton, which is simulated by the input/output produced by traversing the program tree. The automaton accepts a program tree whenever the simulated specification automaton accepts the provided input/output. The constructed automaton, we will further refer to as $\A$, is intersected with two other constructed automata which enforce syntactically correctness and reactivity of the synthesized program, respectively. Then a reactive and syntactically correct program is synthesized by means of an emptiness check of the obtained automaton, involving an exponential blowup to eliminate two-wayness and alternation.

\subsection{Two-way Universal Co-Büchi Tree Automaton}
We construct a two-way \textit{non-deterministic} Büchi tree automaton $\B$ that is equivalent to $\A$ by using deterministic evaluation of Boolean expressions. We construct $\B$ without an exponential blowup in the state space. We then complement $\B$ into a two-way universal co-Büchi tree automaton convenient for the bounded synthesis approach.

The two-way alternating Büchi tree automaton $\A$ uses universal choices only in relation to Boolean expression evaluation. For example, for $\textbf{if}$, $\textbf{while}$ and $assign_b$-statements a Boolean evaluation is needed. In this cases it non-deterministically guesses whether the expression evaluates to $0$ or $1$ and then universally sends one copy into the Boolean expression, which evaluates to $true$ iff the expression evaluates to the expected value, and one copy to continue the corresponding normal execution. The copy evaluating the Boolean expression walks only downwards and since the subtree corresponding to the Boolean expression is finite, this copy terminates to either $true$ or $false$ after finitely many steps.
Instead of using both non-deterministic and universal choices, we evaluate the Boolean subtree deterministically in finitely many steps and then continue the normal execution based on the result of the evaluation.

Note that we not only remove all universal choices but additionally all unnecessary sources of non-determinism. Therefore, besides traversing input- and output-labels, that introduce unavoidable non-determinism, our program simulation is deterministic.

Our automaton $\B$ with the set of states
$$ P_{exec} = S \times Q_{spec} \times \mb{B}^{N_\I} \times \{inp, out\} \times \mb{B} $$
$$ P^\B_{expr} = S \times Q_{spec} \times \mb{B}^{N_\I} \times \{inp, out\} \times \{ \top , \bot \}$$
$$ P^\B =  P^\B_{expr}\cup P_{exec} $$
and initial state $p^\B_0 = (s_0, q_0, i_0, inp, 0)$, is defined with the transitions shown in \cref{semantics}, where $s\in S$ is a variable valuation, $q\in Q_{spec}$ the state of the simulated specification automaton, $i\in \mb{B}^{N_\I}$ the last received input, $m\in\{inp,out\}$ a flag to ensure alternation between inputs and outputs, $r\in\{\top,\bot\}$ the result of a Boolean evaluation and $t\in\{0,1\}$ a flag for the Büchi condition, which ensures that the specification automaton is simulated for infinite steps and is only set to $1$ for a single simulation step after an output statement. We express states corresponding to Boolean evaluations and program execution as \mbox{$(s,q,i,m,r)\in P^\B_{expr}$} and $(s,q,i,m,t)\in P^\B_{exec}$, respectively.

The notation reads as follows: If the automaton enters a node with one of the black incoming edges, it can move in the direction of the black outgoing edges, while updating his state corresponding to the annotated update expression, depicted by an enclosing rectangle. Additionally, the automaton needs to fulfill the conditions annotated to the edges it traverses. To express non-determinism we use sets of update expressions, such that each expression represents one possible successor.
All state values not contained in the update expression stay the same, except $t$ which is set to $0$. When changing from Boolean evaluation to program execution, we copy $s$, $q$, $i$, $m$ and vice versa.

\begin{figure}
	\centering
	\scalebox{.9}{
	\begin{tikzpicture}[auto,semithick
					virtual/.style={thick,densely dashed},
					active/.style={thick,->,shorten >=2pt,shorten <=2pt,>=stealth},
					inactive/.style={color=lightgray,->,shorten >=2pt,shorten <=2pt,>=stealth},
					none/.style={draw=none}
					]
					\tikzstyle{every state}=[rectangle,rounded corners,fill=none,draw=black,text=black,initial text=]
					\tikzstyle{block} = [rectangle,draw=black,text=black,initial text=, minimum size=.6cm]
					
					\node[state] 	(A) at (0,0) {\textbf{tt}};
					\node[block]    [align=left, anchor=west,xshift=.3cm](B) at (A.east) {$r\rightarrow \top$};
					
					\draw[active] ([xshift=.08cm]A.north) -- (0.08,1.2);
					\draw[active] (-0.08,1.2) -- ([xshift=-.08cm]A.north);
					
					\draw[none] (A.217) -- (-1.05,-.95);
					\draw[none] (-.95,-1.05) -- (A.233);
					
					\draw[none] (A.307) -- (.95,-1.05);
					\draw[none] (1.05,-.95) -- (A.323);
					\end{tikzpicture}
	\begin{tikzpicture}[auto,semithick
					virtual/.style={thick,densely dashed},
					active/.style={thick,->,shorten >=2pt,shorten <=2pt,>=stealth},
					inactive/.style={color=lightgray,->,shorten >=2pt,shorten <=2pt,>=stealth},
					none/.style={draw=none}
					]
					\tikzstyle{every state}=[rectangle,rounded corners,fill=none,draw=black,text=black,initial text=]
					\tikzstyle{block} = [rectangle,draw=black,text=black,initial text=, minimum size=.6cm]
					
					\node[state] 	(A) at (0,0) {\textbf{ff}};
					\node[block]    [align=left, anchor=west,xshift=.3cm](B) at (A.east) {$r\rightarrow \bot$};
					
					\draw[active] ([xshift=.08cm]A.north) -- (0.08,1.2);
					\draw[active] (-0.08,1.2) -- ([xshift=-.08cm]A.north);
					
					\draw[none] (A.217) -- (-1.05,-.95);
					\draw[none] (-.95,-1.05) -- (A.233);
					
					\draw[none] (A.307) -- (.95,-1.05);
					\draw[none] (1.05,-.95) -- (A.323);
					\end{tikzpicture}
	\begin{tikzpicture}[auto,semithick
					virtual/.style={thick,densely dashed},
					active/.style={thick,->,shorten >=2pt,shorten <=2pt,>=stealth},
					inactive/.style={color=lightgray,->,shorten >=2pt,shorten <=2pt,>=stealth},
					none/.style={draw=none}
					]
					\tikzstyle{every state}=[rectangle,rounded corners,fill=none,draw=black,text=black,initial text=]
					\tikzstyle{block} = [rectangle,draw=black,text=black,initial text=, minimum size=.6cm]
					
					\node[state] 	(A) at (0,0) {b};
					\node[block]    [align=left, anchor=west,xshift=.3cm](B) at (A.east) {$r\rightarrow s[b]$};
					
					\draw[active] ([xshift=.08cm]A.north) -- (0.08,1.2);
					\draw[active] (-0.08,1.2) -- ([xshift=-.08cm]A.north);
					
					\draw[none] (A.217) -- (-1.05,-.95);
					\draw[none] (-.95,-1.05) -- (A.233);
					
					\draw[none] (A.307) -- (.95,-1.05);
					\draw[none] (1.05,-.95) -- (A.323);
					\end{tikzpicture}}\\
	\scalebox{.9}{
	\begin{tikzpicture}[auto,semithick
					virtual/.style={thick,densely dashed},
					active/.style={thick,->,shorten >=2pt,shorten <=2pt,>=stealth},
					inactive/.style={color=lightgray,->,shorten >=2pt,shorten <=2pt,>=stealth},
					none/.style={draw=none}
					]
					\tikzstyle{every state}=[rectangle,rounded corners,fill=none,draw=black,text=black,initial text=]
					\tikzstyle{block} = [rectangle,draw=black,text=black,initial text=, minimum size=1cm]
					
					\node[state] 	(A) at (0,0) {$\vee$};
					
					\draw[inactive] ([xshift=.08cm]A.north) -- (0.08,1.2);
					\draw[active] (-0.08,1.2) -- ([xshift=-.08cm]A.north);
					
					\draw[active] (A.217) -- (-1.05,-.95);
					\draw[inactive] (-.95,-1.05) -- (A.233);
					
					\draw[inactive] (A.307) -- (.95,-1.05);
					\draw[inactive] (1.05,-.95) -- (A.323);
					\end{tikzpicture}
	\begin{tikzpicture}[auto,semithick
					virtual/.style={thick,densely dashed},
					active/.style={thick,->,shorten >=2pt,shorten <=2pt,>=stealth},
					inactive/.style={color=lightgray,->,shorten >=2pt,shorten <=2pt,>=stealth},
					none/.style={draw=none}
					]
					\tikzstyle{every state}=[rectangle,rounded corners,fill=none,draw=black,text=black,initial text=]
					\tikzstyle{block} = [rectangle,draw=black,text=black,initial text=, minimum size=1cm]
					
					\node[state] 	(A) at (0,0) {$\vee$};
					
					\draw[active] ([xshift=.08cm]A.north) -- (0.08,1.2);
					\draw[inactive] (-0.08,1.2) -- ([xshift=-.08cm]A.north);
					
					\draw[inactive] (A.217) -- (-1.05,-.95);
					\draw[inactive] (-.95,-1.05) -- (A.233);
					
					\draw[inactive] (A.307) -- (.95,-1.05);
					\draw[active] (1.05,-.95) -- (A.323);
					\end{tikzpicture}
	\begin{tikzpicture}[auto,semithick
					virtual/.style={thick,densely dashed},
					active/.style={thick,->,shorten >=2pt,shorten <=2pt,>=stealth},
					inactive/.style={color=lightgray,->,shorten >=2pt,shorten <=2pt,>=stealth},
					none/.style={draw=none}
					]
					\tikzstyle{every state}=[rectangle,rounded corners,fill=none,draw=black,text=black,initial text=]
					\tikzstyle{block} = [rectangle,draw=black,text=black,initial text=, minimum size=.6cm]
					
					\node[state] 	(A) at (0,0) {$\vee$};
					
					\draw[active] ([xshift=.08cm]A.north) -- (0.08,1.2) node[midway, right] {$r=\top$};
					\draw[inactive] (-0.08,1.2) -- ([xshift=-.08cm]A.north);
					
					\draw[inactive] (A.217) -- (-1.05,-.95);
					\draw[active] (-.95,-1.05) -- (A.233);
					
					\draw[active] (A.307) -- (.95,-1.05);
					\draw[inactive] (1.05,-.95) -- (A.323) node[midway, right, color=black] {$r=\bot$};
					\end{tikzpicture}}\\
	\scalebox{.9}{
	\begin{tikzpicture}[auto,semithick
					virtual/.style={thick,densely dashed},
					active/.style={thick,->,shorten >=2pt,shorten <=2pt,>=stealth},
					inactive/.style={color=lightgray,->,shorten >=2pt,shorten <=2pt,>=stealth},
					none/.style={draw=none}
					]
					\tikzstyle{every state}=[rectangle,rounded corners,fill=none,draw=black,text=black,initial text=]
					\tikzstyle{block} = [rectangle,draw=black,text=black,initial text=, minimum size=.6cm]
					
					\node[state] 	(A) at (0,0) {$\neg$};
					
					\draw[inactive] ([xshift=.08cm]A.north) -- (0.08,1.2);
					\draw[active] (-0.08,1.2) -- ([xshift=-.08cm]A.north);
					
					\draw[active] (A.217) -- (-1.05,-.95);
					\draw[inactive] (-.95,-1.05) -- (A.233);
					
					\draw[none] (A.307) -- (.95,-1.05);
					\draw[none] (1.05,-.95) -- (A.323);
					\end{tikzpicture}
	\begin{tikzpicture}[auto,semithick
					virtual/.style={thick,densely dashed},
					active/.style={thick,->,shorten >=2pt,shorten <=2pt,>=stealth},
					inactive/.style={color=lightgray,->,shorten >=2pt,shorten <=2pt,>=stealth},
					none/.style={draw=none}
					]
					\tikzstyle{every state}=[rectangle,rounded corners,fill=none,draw=black,text=black,initial text=]
					\tikzstyle{block} = [rectangle,draw=black,text=black,initial text=, minimum size=.6cm]
					
					\node[state] 	(A) at (0,0) {$\neg$};
					\node[block]    [align=left, anchor=west,xshift=.3cm](B) at (A.east) {$r\rightarrow \overline{r}$};
					
					\draw[active] ([xshift=.08cm]A.north) -- (0.08,1.2);
					\draw[inactive] (-0.08,1.2) -- ([xshift=-.08cm]A.north);
					
					\draw[inactive] (A.217) -- (-1.05,-.95);
					\draw[active] (-.95,-1.05) -- (A.233);
					
					\draw[none] (A.307) -- (.95,-1.05);
					\draw[none] (1.05,-.95) -- (A.323);
					\end{tikzpicture}}\\
	\scalebox{.9}{
	\begin{tikzpicture}[auto,semithick
					virtual/.style={thick,densely dashed},
					active/.style={thick,->,shorten >=2pt,shorten <=2pt,>=stealth},
					inactive/.style={color=lightgray,->,shorten >=2pt,shorten <=2pt,>=stealth},
					none/.style={draw=none}
					]
					\tikzstyle{every state}=[rectangle,rounded corners,fill=none,draw=black,text=black,initial text=]
					\tikzstyle{block} = [rectangle,draw=black,text=black,initial text=, minimum size=.6cm]
					
					\node[state] 	(A) at (0,0) {\textit{while}};
					
					\draw[inactive] ([xshift=.08cm]A.north) -- (0.08,1.2);
					\draw[active] (-0.08,1.2) -- ([xshift=-.08cm]A.north);
					
					\draw[active] (A.217) -- (-1.05,-.95);
					\draw[inactive] (-.95,-1.05) -- (A.233);
					
					\draw[inactive] (A.307) -- (.95,-1.05);
					\draw[active] (1.05,-.95) -- (A.323);
					\end{tikzpicture}
	\begin{tikzpicture}[auto,semithick
					virtual/.style={thick,densely dashed},
					active/.style={thick,->,shorten >=2pt,shorten <=2pt,>=stealth},
					inactive/.style={color=lightgray,->,shorten >=2pt,shorten <=2pt,>=stealth},
					none/.style={draw=none}
					]
					\tikzstyle{every state}=[rectangle,rounded corners,fill=none,draw=black,text=black,initial text=]
					\tikzstyle{block} = [rectangle,draw=black,text=black,initial text=, minimum size=.6cm]
					
					\node[state] 	(A) at (0,0) {\textit{while}};
					
					\draw[active] ([xshift=.08cm]A.north) -- (0.08,1.2) node[midway, right] {$r=\bot$};
					\draw[inactive] (-0.08,1.2) -- ([xshift=-.08cm]A.north);
					
					\draw[inactive] (A.217) -- (-1.05,-.95);
					\draw[active] (-.95,-1.05) -- (A.233);
					
					\draw[active] (A.307) -- (.95,-1.05);
					\draw[inactive] (1.05,-.95) -- (A.323) node[midway, right, color=black] {$r=\top$};
					\end{tikzpicture}
	\hskip 20pt
	\begin{tikzpicture}[auto,semithick
					virtual/.style={thick,densely dashed},
					active/.style={thick,->,shorten >=2pt,shorten <=2pt,>=stealth},
					inactive/.style={color=lightgray,->,shorten >=2pt,shorten <=2pt,>=stealth},
					none/.style={draw=none}
					]
					\tikzstyle{every state}=[rectangle,rounded corners,fill=none,draw=black,text=black,initial text=]
					\tikzstyle{block} = [rectangle,draw=black,text=black,initial text=, minimum size=.6cm]
					
					\node[state] 	(A) at (0,0) {\textit{assign}$_b$};
					
					\draw[inactive] ([xshift=.08cm]A.north) -- (0.08,1.2);
					\draw[active] (-0.08,1.2) -- ([xshift=-.08cm]A.north);
					
					\draw[active] (A.217) -- (-1.05,-.95);
					\draw[inactive] (-.95,-1.05) -- (A.233);
					
					\draw[none] (A.307) -- (.95,-1.05);
					\draw[none] (1.05,-.95) -- (A.323);
					\end{tikzpicture}
	\begin{tikzpicture}[auto,semithick
					virtual/.style={thick,densely dashed},
					active/.style={thick,->,shorten >=2pt,shorten <=2pt,>=stealth},
					inactive/.style={color=lightgray,->,shorten >=2pt,shorten <=2pt,>=stealth},
					none/.style={draw=none}
					]
					\tikzstyle{every state}=[rectangle,rounded corners,fill=none,draw=black,text=black,initial text=]
					\tikzstyle{block} = [rectangle,draw=black,text=black,initial text=, minimum size=.6cm]
					
					\node[state] 	(A) at (0,0) {\textit{assign}$_b$};
					\node[block]    [align=left, anchor=west,xshift=.3cm](B) at (A.east) {$s[b]\rightarrow r$};
					
					\draw[active] ([xshift=.08cm]A.north) -- (0.08,1.2);
					\draw[inactive] (-0.08,1.2) -- ([xshift=-.08cm]A.north);
					
					\draw[inactive] (A.217) -- (-1.05,-.95);
					\draw[active] (-.95,-1.05) -- (A.233);
					
					\draw[none] (A.307) -- (.95,-1.05);
					\draw[none] (1.05,-.95) -- (A.323);
					\end{tikzpicture}}\\
	\scalebox{.9}{
	\begin{tikzpicture}[auto,semithick
virtual/.style={thick,densely dashed},
active/.style={thick,->,shorten >=2pt,shorten <=2pt,>=stealth},
inactive/.style={color=lightgray,->,shorten >=2pt,shorten <=2pt,>=stealth},
none/.style={draw=none}
]
\tikzstyle{every state}=[rectangle,rounded corners,fill=none,draw=black,text=black,initial text=]
\tikzstyle{block} = [rectangle,draw=black,text=black,initial text=, minimum size=.6cm]

\node[state] 	(A) at (0,0) {\textit{if}};
\node[state,color=lightgray]	(B) at (1.5,-1.5) {\textit{then}};

\draw[inactive] ([xshift=.08cm]A.north) -- (0.08,1.2);
\draw[active] (-0.08,1.2) -- ([xshift=-.08cm]A.north);

\draw[active] (A.217) -- (-1.05,-.95);
\draw[inactive] (-.95,-1.05) -- (A.233);


\draw[inactive] (A.307) -- (B.143);
\draw[inactive] (B.127) -- (A.323);

\draw[inactive] (B.217) -- (.45,-2.45);
\draw[inactive] (.55,-2.55) -- (B.233);

\draw[inactive] (B.307) -- (2.45,-2.55);
\draw[inactive] (2.55,-2.45) -- (B.323);
\end{tikzpicture}
	\begin{tikzpicture}[auto,semithick
					virtual/.style={thick,densely dashed},
					active/.style={thick,->,shorten >=2pt,shorten <=2pt,>=stealth},
					inactive/.style={color=lightgray,->,shorten >=2pt,shorten <=2pt,>=stealth},
					none/.style={draw=none}
					]
					\tikzstyle{every state}=[rectangle,rounded corners,fill=none,draw=black,text=black,initial text=]
					\tikzstyle{block} = [rectangle,draw=black,text=black,initial text=, minimum size=.6cm]
					
					\node[state] 	(A) at (0,0) {\textit{if}};
					\node[state]	(B) at (1.5,-1.5) {\textit{then}};
					
					\draw[inactive] ([xshift=.08cm]A.north) -- (0.08,1.2);
					\draw[inactive] (-0.08,1.2) -- ([xshift=-.08cm]A.north);
					
					\draw[inactive] (A.217) -- (-1.05,-.95);
					\draw[active] (-.95,-1.05) -- (A.233);
					
					
					\draw[active] (A.307) -- (B.143);
					\draw[inactive] (B.127) -- (A.323);
					
					\draw[active] (B.217) -- (.45,-2.45) node[midway, left, color=black] {$r=\top$};
					\draw[inactive] (.55,-2.55) -- (B.233);
					
					\draw[active] (B.307) -- (2.45,-2.55);
					\draw[inactive] (2.55,-2.45) -- (B.323) node[midway, right, color=black] {$r=\bot$};
					\end{tikzpicture}
	\begin{tikzpicture}[auto,semithick
					virtual/.style={thick,densely dashed},
					active/.style={thick,->,shorten >=2pt,shorten <=2pt,>=stealth},
					inactive/.style={color=lightgray,->,shorten >=2pt,shorten <=2pt,>=stealth},
					none/.style={draw=none}
					]
					\tikzstyle{every state}=[rectangle,rounded corners,fill=none,draw=black,text=black,initial text=]
					\tikzstyle{block} = [rectangle,draw=black,text=black,initial text=, minimum size=.6cm]
					
					\node[state] 	(A) at (0,0) {\textit{if}};
					\node[state]	(B) at (1.5,-1.5) {\textit{then}};
					
					\draw[active] ([xshift=.08cm]A.north) -- (0.08,1.2);
					\draw[inactive] (-0.08,1.2) -- ([xshift=-.08cm]A.north);
					
					\draw[inactive] (A.217) -- (-1.05,-.95);
					\draw[inactive] (-.95,-1.05) -- (A.233);
					
					
					\draw[inactive] (A.307) -- (B.143);
					\draw[active] (B.127) -- (A.323);
					
					\draw[inactive] (B.217) -- (.45,-2.45);
					\draw[active] (.55,-2.55) -- (B.233);
					
					\draw[inactive] (B.307) -- (2.45,-2.55);
					\draw[active] (2.55,-2.45) -- (B.323);
					\end{tikzpicture}}\\
	\begin{tikzpicture}[auto,semithick
					virtual/.style={thick,densely dashed},
					active/.style={thick,->,shorten >=2pt,shorten <=2pt,>=stealth},
					inactive/.style={color=lightgray,->,shorten >=2pt,shorten <=2pt,>=stealth},
					none/.style={draw=none}
					]
					\tikzstyle{every state}=[rectangle,rounded corners,fill=none,draw=black,text=black,initial text=]
					\tikzstyle{block} = [rectangle,draw=black,text=black,initial text=, minimum size=.6cm]
					
					\node[state] 	(A) at (0,0) {\textbf{input} $\vec{b}$};
					\node[block]    [align=left, anchor=west,xshift=.3cm](B) at (A.east) {$\{s[\vec{b}]\rightarrow \vec{val}, i\rightarrow \vec{val}, m\rightarrow out\mid \vec{val}\in\mb{B}^{N_\I} \}$};
					
					\draw[active] ([xshift=.08cm]A.north) -- (0.08,1.2);
					\draw[active] (-0.08,1.2) -- ([xshift=-.08cm]A.north) node[midway, left] {$m = inp$};
					
					\draw[none] (A.217) -- (-1.05,-.95);
					\draw[none] (-.95,-1.05) -- (A.233);
					
					\draw[none] (A.307) -- (.95,-1.05);
					\draw[none] (1.05,-.95) -- (A.323);
					\end{tikzpicture}\\
	\begin{tikzpicture}[auto,semithick
					virtual/.style={thick,densely dashed},
					active/.style={thick,->,shorten >=2pt,shorten <=2pt,>=stealth},
					inactive/.style={color=lightgray,->,shorten >=2pt,shorten <=2pt,>=stealth},
					none/.style={draw=none}
					]
					\tikzstyle{every state}=[rectangle,rounded corners,fill=none,draw=black,text=black,initial text=]
					\tikzstyle{block} = [rectangle,draw=black,text=black,initial text=, minimum size=.6cm]
					
					\node[state] 	(A) at (0,0) {\textbf{output} $\vec{b}$};
					\node[block]    [align=left, anchor=west,xshift=.3cm](B) at (A.east) {$\{q \rightarrow q', m\rightarrow inp, t \rightarrow 1\mid q'\in\delta(q,i,s[\vec{b}])\}$};
					
					\draw[active] ([xshift=.08cm]A.north) -- (0.08,1.2);
					\draw[active] (-0.08,1.2) -- ([xshift=-.08cm]A.north) node[midway, left] {$m = out$};
					
					\draw[none] (A.217) -- (-1.05,-.95);
					\draw[none] (-.95,-1.05) -- (A.233);
					
					\draw[none] (A.307) -- (.95,-1.05);
					\draw[none] (1.05,-.95) -- (A.323);
					\end{tikzpicture}\\
	\scalebox{.9}{
	\begin{tikzpicture}[auto,semithick
					virtual/.style={thick,densely dashed},
					active/.style={thick,->,shorten >=2pt,shorten <=2pt,>=stealth},
					inactive/.style={color=lightgray,->,shorten >=2pt,shorten <=2pt,>=stealth},
					none/.style={draw=none}
					]
					\tikzstyle{every state}=[rectangle,rounded corners,fill=none,draw=black,text=black,initial text=]
					\tikzstyle{block} = [rectangle,draw=black,text=black,initial text=, minimum size=.6cm]
					
					\node[state] 	(A) at (0,0) {\textbf{skip}};
					
					\draw[active] ([xshift=.08cm]A.north) -- (0.08,1.2);
					\draw[active] (-0.08,1.2) -- ([xshift=-.08cm]A.north);
					
					\draw[none] (A.217) -- (-1.05,-.95);
					\draw[none] (-.95,-1.05) -- (A.233);
					
					\draw[none] (A.307) -- (.95,-1.05);
					\draw[none] (1.05,-.95) -- (A.323);
					\end{tikzpicture}
	\begin{tikzpicture}[auto,semithick
					virtual/.style={thick,densely dashed},
					active/.style={thick,->,shorten >=2pt,shorten <=2pt,>=stealth},
					inactive/.style={color=lightgray,->,shorten >=2pt,shorten <=2pt,>=stealth},
					none/.style={draw=none}
					]
					\tikzstyle{every state}=[rectangle,rounded corners,fill=none,draw=black,text=black,initial text=]
					\tikzstyle{block} = [rectangle,draw=black,text=black,initial text=, minimum size=1cm]
					
					\node[state] 	(A) at (0,0) {\textbf{;}};
					
					\draw[inactive] ([xshift=.08cm]A.north) -- (0.08,1.2);
					\draw[active] (-0.08,1.2) -- ([xshift=-.08cm]A.north);
					
					\draw[active] (A.217) -- (-1.05,-.95);
					\draw[inactive] (-.95,-1.05) -- (A.233);
					
					\draw[inactive] (A.307) -- (.95,-1.05);
					\draw[inactive] (1.05,-.95) -- (A.323);
					\end{tikzpicture}
	\begin{tikzpicture}[auto,semithick
					virtual/.style={thick,densely dashed},
					active/.style={thick,->,shorten >=2pt,shorten <=2pt,>=stealth},
					inactive/.style={color=lightgray,->,shorten >=2pt,shorten <=2pt,>=stealth},
					none/.style={draw=none}
					]
					\tikzstyle{every state}=[rectangle,rounded corners,fill=none,draw=black,text=black,initial text=]
					\tikzstyle{block} = [rectangle,draw=black,text=black,initial text=, minimum size=1cm]
					
					\node[state] 	(A) at (0,0) {\textbf{;}};
					
					\draw[inactive] ([xshift=.08cm]A.north) -- (0.08,1.2);
					\draw[inactive] (-0.08,1.2) -- ([xshift=-.08cm]A.north);
					
					\draw[inactive] (A.217) -- (-1.05,-.95);
					\draw[active] (-.95,-1.05) -- (A.233);
					
					\draw[active] (A.307) -- (.95,-1.05);
					\draw[inactive] (1.05,-.95) -- (A.323);
					\end{tikzpicture}
	\begin{tikzpicture}[auto,semithick
					virtual/.style={thick,densely dashed},
					active/.style={thick,->,shorten >=2pt,shorten <=2pt,>=stealth},
					inactive/.style={color=lightgray,->,shorten >=2pt,shorten <=2pt,>=stealth},
					none/.style={draw=none}
					]
					\tikzstyle{every state}=[rectangle,rounded corners,fill=none,draw=black,text=black,initial text=]
					\tikzstyle{block} = [rectangle,draw=black,text=black,initial text=, minimum size=1cm]
					
					\node[state] 	(A) at (0,0) {\textbf{;}};
					
					\draw[active] ([xshift=.08cm]A.north) -- (0.08,1.2);
					\draw[inactive] (-0.08,1.2) -- ([xshift=-.08cm]A.north);
					
					\draw[inactive] (A.217) -- (-1.05,-.95);
					\draw[inactive] (-.95,-1.05) -- (A.233);
					
					\draw[inactive] (A.307) -- (.95,-1.05);
					\draw[active] (1.05,-.95) -- (A.323);
					\end{tikzpicture}}
	
	\caption{Semantics of the constructed two-way automata}\label{semantics}
\end{figure}
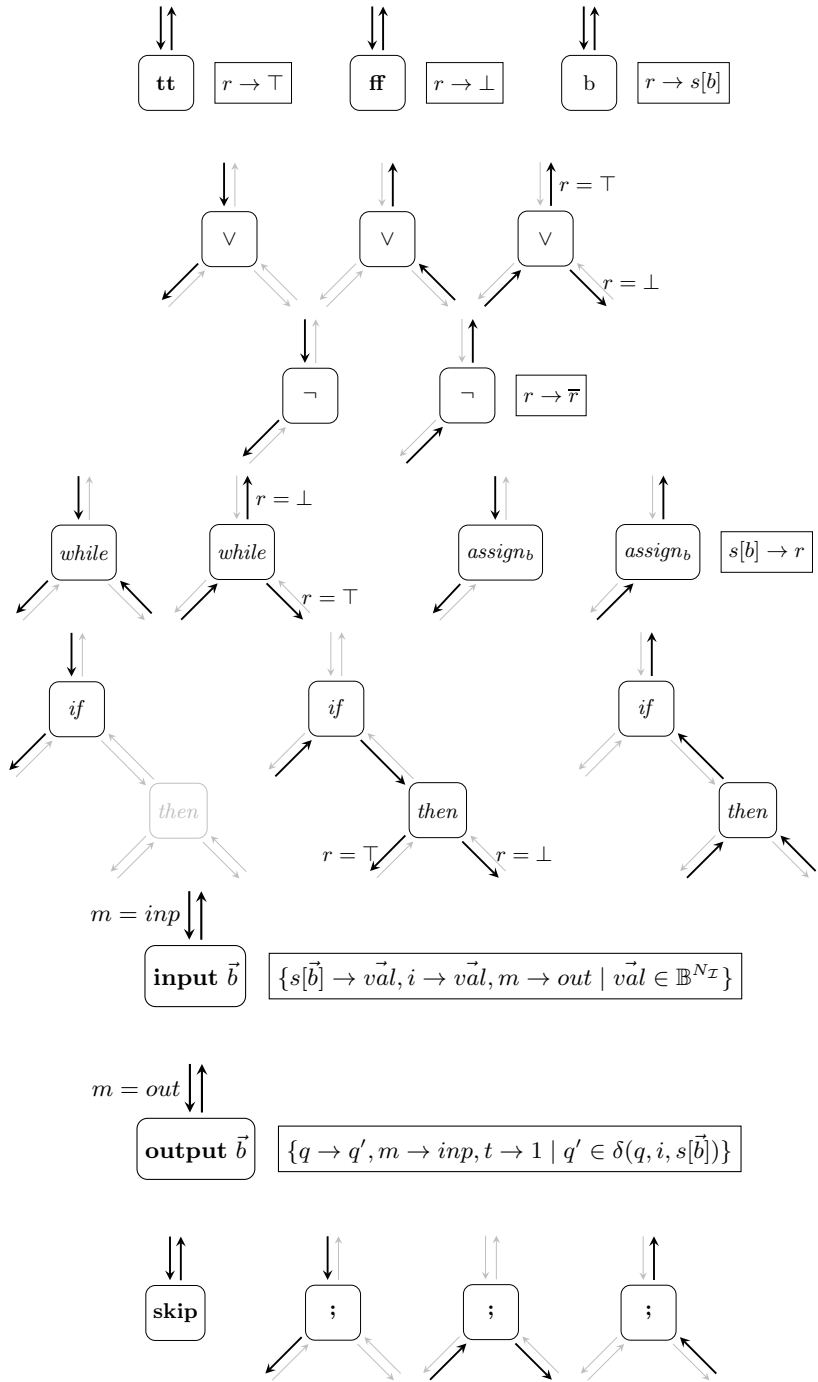

The set of accepting states is defined as $$F^\B = \big\{ (s,q,i,m,1) \mid q \in F_{spec} \big\}$$

A formal construction of $\B$ is given in \cref{explicitB}. Note that $\B$ behaves similar to $\A$ during normal execution and that only Boolean evaluation was altered. Therefore, the state spaces of the automata only differ in the states corresponding to Boolean evaluation and especially the sets of accepting states $F^\A$ and $F^\B$ are equivalent. Therefore, we can prove the equivalence by showing that both automata visit the same sequences of accepting states and thus accept the same program trees. 

\begin{theorem}[\cref{appendixEquivalenceProof}]
	$\mc{L}(\A) = \mc{L}(\B)$
\end{theorem}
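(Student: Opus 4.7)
The plan is to establish the equality by exhibiting a close simulation between accepting runs of $\A$ and $\B$ on any program tree $(T,\tau)$, exploiting the fact that Boolean subtrees are finite and evaluate deterministically. The starting observation is that $\A$ and $\B$ agree on the program-execution fragment: the transitions at \textbf{input}, \textbf{output}, \textbf{skip}, \textbf{;}, and the transitions that move into the post-evaluation successor of \textbf{if}, \textbf{while}, and \textit{assign}$_b$ nodes are identical, and the accepting sets $F^\A$ and $F^\B$ coincide on $P_{exec}$. Hence the two automata differ only in the finite detours they take to determine the value of a Boolean expression.

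The first step is to prove, by structural induction on a Boolean subtree rooted at a node $t$ of $(T,\tau)$ with label in $\{\textbf{tt},\textbf{ff},b,\vee,\neg\}$, two facts in parallel. First, the deterministic evaluation phase of $\B$ starting at $t$ with valuation $s$ terminates after finitely many steps and leaves the automaton in a state whose result component $r$ equals the standard Boolean semantics $e_s(t)$ of the expression under $s$. Second, a universal verification phase of $\A$ that descends into the same subtree with a guessed result $r'$ accepts (i.e.\ every spawned branch terminates in the $\mathit{true}$ state) if and only if $r' = e_s(t)$. Both statements follow by case analysis on the root label using the transition rules depicted in \cref{semantics}; the induction bottoms out at the $\textbf{tt}$, $\textbf{ff}$, and $b$ leaves whose transitions set $r$ directly.

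The second step turns this equivalence of Boolean detours into a bijection between accepting runs. Given an accepting run of $\A$ on a program tree, every guess $r'$ performed at an \textbf{if}, \textbf{while}, or \textit{assign}$_b$ node must agree with $e_s(\cdot)$ by the first step, since otherwise the universally spawned verification copy would reject. Collapsing each verified Boolean detour to its atomic $\B$-counterpart yields a run of $\B$ whose $P_{exec}$-trajectory, together with the Büchi flag updates triggered exclusively by \textbf{output} nodes, coincides with the $P_{exec}$-trajectory of the original $\A$-run. In particular the sequence of visits to $F^\A = F^\B$ is identical, so the $\B$-run is accepting. Conversely, each deterministic $\B$-evaluation can be expanded into an $\A$-guess $r' = e_s(\cdot)$ together with its verifying universal subcomputation, which accepts by the first step. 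Either direction yields the corresponding inclusion of languages.

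The main obstacle is bookkeeping in the formal correspondence, since runs of two-way alternating tree automata are themselves labelled trees in which a node of $(T,\tau)$ can be revisited arbitrarily often, and the universal Boolean-verification copies have to be spliced in and out without disturbing the rest of the run. A clean way to handle this is to organise each $\A$-run as a main execution spine consisting solely of $P_{exec}$-states, with independent finite Boolean-evaluation gadgets attached at each \textbf{if}/\textbf{while}/\textit{assign}$_b$ visit; Büchi acceptance of the full $\A$-run then reduces to Büchi acceptance of its spine, which by the simulation above is in bijection with a $\B$-run sharing the exact same sequence of accepting visits.
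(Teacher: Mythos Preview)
Your proposal is correct and follows essentially the same approach as the paper: the paper first proves a lemma by structural induction on Boolean expressions establishing that $\B$'s deterministic evaluation and $\A$'s guess-and-verify yield the same result, and then shows that every accepting $\A$-run has a unique infinite $P_{exec}$-spine (with finite Boolean subtrees attached) which is matched step-for-step by a $\B$-run, and vice versa, using a case analysis on the labels $assign_b$, \textbf{if}, \textbf{while}. Your organisation of the run into a spine with attached gadgets, and the observation that $F^\A=F^\B\subseteq P_{exec}$ so that acceptance is determined by the spine alone, are exactly the structure of the paper's argument.
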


We now complement the constructed two-way non-deterministic Büchi automaton into a two-way universal co-Büchi automaton. From this point onwards, we refer with $\B$ to the two-way universal co-Büchi automaton. 

Since $\A$ accepts precisely the programs that fail the specification and interact infinitely often with the environment, the complement now only accepts programs that do satisfy the specification or interact finitely often with the environment. We fix the remaining misbehavior by enforcing syntactical correctness and reactiveness.

\subsection{Guarantee Syntactical Correctness}

Due to the fact that $\B$ was designed to correctly simulate programs of our defined syntax and transitions were only defined for syntax-valid statements, $\B$ implicitly rejects programs that are syntactically invalid. But such programs are only then rejected when their syntactically incorrect statements are traversed in the simulation, therefore $\B$ does not check for syntactically correct subtrees that are unreachable. It is now arguable whether the syntax check is necessary in practice. One could expect programs to be syntactically correct in total and this expectation is in general well-argued. On the other hand, we do perform bounded synthesis, i.e., we search for implementations with a bound on the implementation size and then increment this bound until a valid implementation is found. It is easy to see that programs with unreachable parts can be represented by smaller programs with the same behavior simply by removing unreachable statements. Therefore, with an incremental search one first finds the smallest and thus syntactically correct programs.

\subsection{Guarantee Reactiveness}
It now remains to guarantee reactiveness of the programs accepted by $\B$. For that purpose, we introduce a two-way universal Büchi automaton $\B_{reactive}$, which only accepts program trees that are reactive. This automaton is designed with the exact same states and transitions as $\B$ but with another acceptance condition. The intersection of $\B$ and $\B_{reactive}$ then yields a two-way universal Streett automaton $\B'$. We construct $\B_{reactive}$ with the set of accepting states:
$$F^\B_{reactive} = \big\{ (s,q,i,m,1) \mid \forall s,q,i,m \big\}$$
$\B_{reactive}$ accepts a program tree, iff it produces infinitely many outputs on all possible executions. Due to the alternation between input and output statements the program reacts infinitely often with its environment, i.e., it is reactive.

Formally, $\B'$ is the tuple $(\Sigma_P, P^\B, \delta^\B_L, \delta^\B_R, \delta^\B_{LR}, \delta^\B_\emptyset, \text{STREETT}(F^{\B'}))$, where $$F^{\B'}=\big\{ (F^\B, \emptyset), (P^\B, F^\B_{reactive}) \big\}.$$

\begin{lemma}
	$\mc{L}(\B') = \mc{L}(\B)\cap\mc{L}(\B_{reactive})$
\end{lemma}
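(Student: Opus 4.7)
The plan is to exploit the fact that $\B$ and $\B_{reactive}$ were explicitly constructed with identical state spaces, initial states, and transition functions, differing only in their acceptance components. Consequently $\B'$ inherits this same state space and transition structure, so a run of $\B'$ on a program tree $t$ is, literally, a run of $\B$ on $t$ and simultaneously a run of $\B_{reactive}$ on $t$. The whole statement therefore reduces to showing that the Streett condition $\mathrm{STREETT}(F^{\B'})$ is satisfied by a run iff that run satisfies both the co-Büchi condition of $\B$ and the Büchi condition of $\B_{reactive}$.

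I would analyze the two Streett pairs separately. For the pair $(F^\B, \emptyset)$, the Streett clause $\mathrm{Inf}(\rho)\cap F^\B \neq \emptyset \implies \mathrm{Inf}(\rho)\cap\emptyset\neq\emptyset$ is equivalent, since the right-hand side is false, to $\mathrm{Inf}(\rho)\cap F^\B = \emptyset$; this is exactly the co-Büchi acceptance condition of $\B$ on the set $F^\B$. For the pair $(P^\B, F^\B_{reactive})$, observe that every infinite run lives in $P^\B$, so $\mathrm{Inf}(\rho)\cap P^\B \neq \emptyset$ holds trivially; the Streett clause therefore collapses to $\mathrm{Inf}(\rho)\cap F^\B_{reactive}\neq\emptyset$, which is exactly the Büchi acceptance condition defining $\B_{reactive}$.

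Putting this together, for a universal automaton $t\in\mc{L}(\B')$ iff every run $\rho$ of $\B'$ on $t$ satisfies $\mathrm{STREETT}(F^{\B'})$, iff every such $\rho$ satisfies both conjuncts above, iff every run of $\B$ on $t$ is co-Büchi-accepting and every run of $\B_{reactive}$ on $t$ is Büchi-accepting, iff $t\in\mc{L}(\B)\cap\mc{L}(\B_{reactive})$. The shared-transition observation is essential: without it one would need to relate distinct run graphs of the two automata on $t$, but here the quantification over runs of $\B'$ is the same as the joint quantification over runs of $\B$ and of $\B_{reactive}$.

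The only mildly delicate point is the degenerate pair $(F^\B,\emptyset)$: one should make explicit that a Streett pair with an empty ``good'' component is precisely a co-Büchi condition on the ``bad'' component, so that the Streett encoding of the co-Büchi part of $\B$ is justified. Apart from this, the argument is a straightforward unfolding of the definitions, with no automata-theoretic construction needed beyond pointing at the shared structure.
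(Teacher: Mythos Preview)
Your proposal is correct and follows exactly the paper's approach: the paper's proof simply notes that the three automata coincide except for their acceptance conditions and that the Streett pairs $(F^\B,\emptyset)$ and $(P^\B,F^\B_{reactive})$ encode the co-Büchi and Büchi conditions of $\B$ and $\B_{reactive}$, respectively. You have merely unfolded these two observations in more detail.
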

\begin{proof}
	Besides the acceptance condition, all three automata are equivalent. The tuples of the Streett condition $(F^\B,\emptyset)$ and $(P^\B, F^\B_{reactive})$ express the co-Büchi and Büchi condition of $\B$ and $\B_{reactive}$, respectively. \qed
\end{proof}

We capture the complete construction by the following theorem.

\begin{theorem}
	Let $B$	be a finite set of Boolean variables and $\varphi$ a specification given as LTL-formula. The constructed two-way universal Streett automaton~$\B'$ accepts program trees over $B$ that satisfy the specification.
\end{theorem}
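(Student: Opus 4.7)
The plan is to reduce the statement to the three pieces already assembled in this section. By the preceding lemma, $\mc{L}(\B') = \mc{L}(\B) \cap \mc{L}(\B_{reactive})$, so it suffices to argue that a $\Sigma_P$-labeled tree $T$ lies in $\mc{L}(\B)$ iff the program it encodes is syntactically well-formed and, on every input sequence, either produces only finitely many outputs or yields an I/O word satisfying $\varphi$; and that $T$ lies in $\mc{L}(\B_{reactive})$ iff every execution produces infinitely many outputs.

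First I would make the simulation correctness of $\B$ precise by structural induction on the program tree. The state space $P_{exec}\cup P^\B_{expr}$ carries exactly the data one needs to simulate a small-step execution: the valuation $s$, the spec-automaton state $q$, the last input $i$, the alternation flag $m$, the Büchi flag $t$, and, for expression states, the partial evaluation result $r$. The transition figure essentially spells out a big-step semantics for expressions (downward traversal to $\mathbf{tt}/\mathbf{ff}/b$, then upward propagation through $\vee$ and $\neg$) and for statements ($;$ chains left-then-right, $\mathbf{while}$ and $\mathbf{if}$ branch on $r$, $assign_b$ writes $r$ into $s[b]$, and $\mathbf{input}/\mathbf{output}$ update $s$, $q$, $m$, $t$ while guessing the input). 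A routine induction then shows that every infinite run of $\B$ on $T$ corresponds to exactly one infinite execution of the encoded program on some input stream, and vice versa, provided $T$ is syntactically well-formed along the reachable part; where $T$ is not, the traversal reaches a label with no matching transition and is rejected, which is precisely the syntactic check discussed above.

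Given simulation correctness, combining the preceding theorem $\mc{L}(\A) = \mc{L}(\B)$ (before complementation) with the complementation of the nondeterministic Büchi automaton into a universal co-Büchi automaton yields that $\mc{L}(\B)$ is the set of program trees all of whose reactive executions are accepted by $\overline{A}_{spec}$, i.e., satisfy $\varphi$. The set $F^\B_{reactive}$ is hit on every output step (thanks to the flag $t=1$ set by the output transition), so $\mc{L}(\B_{reactive})$ is exactly the set of trees whose every execution is reactive. Intersecting, $\mc{L}(\B')$ is the set of reactive program trees over $B$ whose every execution satisfies $\varphi$, as claimed.

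The main obstacle will be the simulation-correctness induction itself: formalising the precise correspondence between one step of the imperative semantics and one block of traversal steps in a two-way tree automaton, whose traversal interleaves downward expression evaluation, upward returns, and sideways continuation through $;$- and $\mathbf{if}/\mathbf{then}$-nodes. The right way to organise this, I expect, is to fix an invariant that pins down the automaton state $(s,q,i,m,t)$ only at well-defined synchronization points — entry into and exit from each statement subtree — and then to check the local figure-driven cases one label at a time; the Büchi/Streett analysis on top of that is essentially bookkeeping on $t$ and on $F^\B_{reactive}$.
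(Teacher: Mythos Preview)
Your decomposition is exactly the one the paper intends: the theorem is stated without proof as a summary (``We capture the complete construction by the following theorem''), and the intended argument is precisely the chain \emph{Lemma} $\Rightarrow$ analyse $\B$ and $\B_{reactive}$ separately $\Rightarrow$ intersect. Your reading of $\B_{reactive}$ and of the complementation step is correct.

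The one place you diverge is the simulation-correctness induction for $\B$. The paper does not carry this out; it imports the simulation correctness of Madhusudan's automaton $\A$ from~\cite{2011:Madhusudan:SRP} and transfers it to the nondeterministic $\B$ via the preceding theorem $\mc{L}(\A)=\mc{L}(\B)$, whose proof (in the appendix) already contains exactly the expression-evaluation induction you sketch. So once you invoke $\mc{L}(\A)=\mc{L}(\B)$, a fresh structural induction on program trees is redundant: the fact that runs of $\B$ correspond to program executions, and that $\A$ (hence nondeterministic $\B$) accepts precisely the reactive programs that \emph{fail} $\varphi$, is inherited. Your route is more self-contained; the paper's is shorter because it leans on Madhusudan's result. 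Either way the remaining steps---complement to universal co-B\"uchi, intersect with $\B_{reactive}$ via the Streett pair---are the same, and your identification of the ``main obstacle'' is already discharged by \cref{lemma1} and the equivalence proof in \cref{appendixEquivalenceProof}.
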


\section{Bounded Synthesis}
In this section, we generalize the bounded synthesis approach towards arbitrary universal automata and then apply it to the constructed two-way automaton to synthesize bounded programs.

We fix $Q$ to be a finite set of states. A run graph is a tuple $\G = (V, v_0, E, f)$, where $V$ is a finite set of vertices, $v_0$ is an initial vertex, $E\subseteq V\times V$ is a set of directed edges and $f: V \rightarrow Q$ is a labeling function. A path \mbox{$\pi=\pi_0\pi_1\hdots\in V^\omega$} is \textit{contained} in $\G$, denoted by $\pi\in\G$, iff $\forall i\in\mb{N}: (\pi_i,\pi_{i+1})\in E$ and $\pi_0 = v_0$, i.e., a path in the graph starting in the initial vertex.
We denote with $f(\pi)=f(\pi_0)f(\pi_1)\hdots\in Q^\omega$ the application of $f$ on every node in the path, i.e., a projection to an infinite sequence of states.
We call a vertex $v$ \textit{unreachable}, iff there exists no path $\pi\in\G$ containing $v$. Let $Acc\subseteq Q^\omega$ be an acceptance condition. We say $\G$ \textit{satisfies} $Acc$, iff every path of $\G$ satisfies the acceptance condition, i.e., $\forall\pi\in\G: f(\pi)\in Acc$.

Run graphs are used to express all possible runs of a universal automaton on some implementation. This is usually done for universal word automata on Mealy machines, but we need a generalized version to later utilize it for two-way universal tree automata on program trees. Let $\Sigma=2^{\I\cup\O}$. We define a run graph $\G^A_\M=(V,v_0,E,f)$ of a universal word automaton $A=(\Sigma,Q,q_0,\delta,Acc)$ on a Mealy machine $\M=(\I,\O,M,m_0,\tau,o)$ as an instantiation of the given definition, where 
\begin{itemize}
	\item $V=Q\times M$, 
	\item $v_0=(q_0,m_0)$, 
	\item $E=\big\{ \big((q,m),(q',m')\big)\mid \exists in\in2^\I,out\in2^\O:$\\
	$\tau(m,in)=m' \;\wedge\; o(m,in)=out \;\wedge\; q'\in\delta(q,in\cup out) \big\}$ and 
	\item $f(q,m)=q$.
\end{itemize}

Since the run graph contains all infinite runs of $A$ on words producible by $\M$, $A$ \textit{accepts} $\M$, iff all runs in $\G^A_\M$ are accepting, i.e., $\G^A_\M$ satisfies $Acc$.

For some bound $c\in \mb{N}$ we denote $\{0,1,\hdots,c\}$ by $D_c$.
For a run graph $\G=(V,v_0,E,f)$ and a bound $c\in\mb{N}$ a $c$-bounded annotation function on $\G$ is a function $\lambda:V\rightarrow D_c$.
An annotation comparison relation of arity $n$ is a family of relations $\triangleright = (\triangleright_0,\triangleright_1,\hdots, \triangleright_{n-1}) \in (2^{Q\times D_c \times D_c})^n$. We refer to $\triangleright_i\subseteq Q\times D_c \times D_c$ as basic comparison relations for $i\in[n]$. We denote the arity with $|\triangleright|=n$. We write $\lambda(v)\triangleright_i\lambda(v')$ for $(f(v),\lambda(v),\lambda(v'))\in\triangleright_i$ and for comparison relations of arity $|\triangleright|=1$ we omit the index.

We say a path $\pi\in\G$ \textit{satisfies} a comparison relation $\triangleright$ with arity $|\triangleright|=n$, denoted by $\pi\models\triangleright$, iff for every basic comparison relation there exists an annotation function that annotates every node with a value such that the annotated number for all consecutive nodes in the path satisfy the basic comparison relation, i.e., $\forall i\in[n]:\exists\lambda:\forall j\in\mb{N}:\lambda(\pi_j)\triangleright_i\lambda(\pi_{j+1}).$ For an acceptance condition $Acc\subseteq Q^\omega$ we say a comparison relation $\triangleright$ \textit{expresses} $Acc$, iff all paths in $\G$ satisfy the relation if and only if the path satisfies the acceptance condition, i.e., $\forall\pi\in\G: \pi\models\triangleright \leftrightarrow f(\pi)\in Acc.$ A $c$-bounded annotation function $\lambda$ on $\G=(V,v_0,E,f)$ is \textit{valid} for a basic annotation comparison relation $\triangleright\subseteq Q\times D_c\times D_c$, iff $\text{for all reachable } v,v'\in V: (v,v')\in E \rightarrow \lambda(v)\triangleright\lambda(v').$

We use the following annotation comparison relations to express Büchi, co-Büchi and Streett acceptance conditions.		
	\begin{itemize}
		\item Let $F\subseteq Q$ and $Acc = \text{BÜCHI}(F)$. Then $\;\triangleright^F_B$ is defined as
		$$\lambda(v)\triangleright^F_B\lambda(v') = \begin{cases}
		true & \text{if } f(v)\in F\\
		\lambda(v)>\lambda(v') & \text{if } f(v)\not\in F
		\end{cases}$$
		\item Let $F\subseteq Q$ and $Acc = \text{CO-BÜCHI}(F)$. Then $\;\triangleright^F_C$ is defined as
		$$\lambda(v)\triangleright^F_C\lambda(v') = \begin{cases}
		\lambda(v)>\lambda(v') & \text{if } f(v)\in F\\
		\lambda(v)\geq\lambda(v') & \text{if } f(v)\not\in F
		\end{cases}$$
		\item Let $F=\{(A_i,G_i)\}_{i\in[k]}\subseteq 2^{Q\times Q}$ and $Acc = \text{STREETT}(F)$. Then $\;\triangleright^F_S = (\triangleright^{F,0}_S, \triangleright^{F,1}_S, \hdots, \triangleright^{F,k-1}_S)$ is defined as
		$$\lambda(v)\triangleright^{F,i}_S\lambda(v') = \begin{cases}
		true & \text{if } f(v)\in G_i\\
		\lambda(v)>\lambda(v') & \text{if } f(v)\in A_i \wedge f(v)\not\in G_i\\
		\lambda(v)\geq\lambda(v') & \text{if } f(v)\not\in A_i\cup G_i
		\end{cases}$$
	\end{itemize}
	
	Note that $|\triangleright^F_B|=|\triangleright^F_C|=1$ and $|\triangleright^F_S|=k$.
	
	\begin{theorem}[\cite{2013:Schewe:BS,2017:Khalimov:BS}]\label{bsold}
		Let $F$ be a set, the acceptance condition of $A$ be expressed by $\triangleright^F_X$ with $X\in\{B,C,S\}$, $c\in\mb{N}$ a bound and $\G^A_\M$ the run graph of $A$ on $\M$.
		
		If and only if, there exists a valid $c$-bounded annotation function $\lambda_i$ on $\G^A_\M$ for each basic comparison relation $\triangleright_i$, then $\G^A_\M$ satisfies $Acc$.
	\end{theorem}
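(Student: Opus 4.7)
The plan is to reduce both directions of the biconditional to the intermediate statement that every path of $\G^A_\M$ satisfies each basic relation $\triangleright_i$, using the hypothesis that $\triangleright^F_X$ \emph{expresses} $Acc$: by definition this gives $\forall \pi \in \G^A_\M: \pi \models \triangleright \leftrightarrow f(\pi) \in Acc$. The argument then splits into a forward direction (annotations yield path satisfaction) and a backward direction (path satisfaction yields annotations), with a separate construction for each of $X \in \{B, C, S\}$ in the backward case.

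The forward direction is the easy one. Given valid $c$-bounded annotations $\lambda_i$, I would take any path $\pi$ starting at $v_0$; every vertex on $\pi$ is reachable, so validity applied edge-by-edge yields $\lambda_i(\pi_j) \triangleright_i \lambda_i(\pi_{j+1})$ for all $j \in \mb{N}$. These $\lambda_i$ directly witness the existentials in the definition of $\pi \models \triangleright$, so every path satisfies $\triangleright$, and the \emph{expresses} hypothesis then delivers $\G^A_\M \models Acc$.

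For the backward direction I would assume every path of $\G^A_\M$ satisfies $\triangleright$ and construct each $\lambda_i$ explicitly. For B\"uchi with accepting set $F$, path satisfaction forbids any cycle of non-$F$-vertices reachable from $v_0$; defining $\lambda(v)$ as the length of the longest $F$-free forward path starting at $v$ is therefore finite, and for $f(v) \notin F$ with $(v,v') \in E$ we get $\lambda(v) \geq 1 + \lambda(v') > \lambda(v')$, while $f(v) \in F$ makes the comparison trivially true. The co-B\"uchi construction is dual: no reachable cycle can contain an $F$-vertex, so letting $\lambda(v)$ be the maximum number of $F$-vertices appearing on any forward path from $v$ is again finite, with $\lambda(v) > \lambda(v')$ forced whenever $f(v) \in F$ and $\lambda(v) \geq \lambda(v')$ otherwise. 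For Streett I would perform essentially the co-B\"uchi construction once per pair $(A_i, G_i)$: path satisfaction rules out any reachable cycle that visits $A_i$ without visiting $G_i$, so setting $\lambda_i(v)$ to the maximum number of $A_i$-vertices on any forward path from $v$ that remains outside $G_i$ gives the required strict decrease on $A_i \setminus G_i$, the weak decrease outside $A_i \cup G_i$, and triviality on $G_i$.

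The main obstacle I expect is verifying that each constructed $\lambda_i$ is genuinely $c$-bounded rather than merely finite at each vertex. The key observation is that the \enquote{bad} cycles excluded by path satisfaction (non-$F$ cycles for B\"uchi, $F$-containing cycles for co-B\"uchi, $A_i$-visiting $G_i$-avoiding cycles for Streett) are precisely what could cause the maxima in the definitions above to blow up; in their absence, the longest relevant forward path in the finite reachable portion of $\G^A_\M$ is bounded by the number of reachable vertices of the corresponding label, yielding a concrete $c$ into which $\lambda_i$ fits. This matches the bound argument of~\cite{2013:Schewe:BS,2017:Khalimov:BS}, and the case analysis on $X$ can be carried out in parallel with the existence argument so that boundedness falls out of the same no-bad-cycle lemma used for finiteness.
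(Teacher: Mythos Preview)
Your proposal is correct. Note that the paper does not actually prove \cref{bsold} itself---it is cited as a known result---but it does prove the generalization \cref{BST}, and your argument lines up with that proof. The forward direction is identical to the paper's: validity of each $\lambda_i$ applied edgewise along any path witnesses $\pi\models\triangleright$, and the \emph{expresses} hypothesis finishes. For the backward direction the paper treats only the Streett case (\cref{generalStreett}), obtaining B\"uchi and co-B\"uchi as special instances, and phrases the construction via an explicit SCC collapse: remove outgoing edges from $G_i$-vertices, argue the resulting graph has no SCC touching $A_i$, contract SCCs to a DAG, and set $\lambda_i(v)$ to the maximal number of $A_i$-vertices along any DAG path from $v$'s component. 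Your three direct constructions are the same idea without the SCC machinery---your ``cut out the bad-free cycle'' step is exactly what the SCC contraction accomplishes---so the difference is presentational rather than substantive. The one thing the paper's formulation buys is that the bound $|V|$ is immediate from the DAG path length, whereas in your version you have to argue it separately (which you do).
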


\subsection{General Bounded Synthesis}
\label{general}
In \cref{bsold} we saw that the acceptance of a Mealy machine $\M$ by a universal automata $A$ can be expressed by the existence of an annotation comparison relation. To do the same for two-way automata on program trees, we generalize this theorem towards arbitrary run graphs.

Let $\A=(\Sigma_P,P,p_0, \delta_L, \delta_R, \delta_{LR}, \delta_\emptyset, Acc)$ be a two-way universal tree automaton and $\T=(T,\tau)$ a program tree. We define the run graph of $\A$ on $\T$ as $\G^\A_\T=(V,v_0,E,f)$, where
\begin{itemize}
	\item $V=P\times T\times \{L,R,D \}$,
	\item $v_0=(p_0,\epsilon,D)$,
	\item $E=\big\{ \big((p,t,d),(p',t',d')\big)\mid \exists d''\in\{ L,R,U \}: \mu(t,d'')=(t',d')$\\$\wedge (p',d'')\in\delta_t(p,\tau(t),d) \big\}$ and 
	\item $f(p,t,d)=p$.
\end{itemize}

For the generalized encoding, we use the same construction for the annotation comparison relation as presented in \cite{2017:Khalimov:BS} for Street acceptance conditions, which conveniently suffices for the general run graphs. Büchi and co-Büchi then follow as special cases.

\begin{lemma}[\cref{generalStrettProof}]\label{generalStreett}
	For a Streett acceptance condition $Acc=\text{Streett}(F)$ with set of tuples of states $F\subseteq 2^{Q\times Q}$ and a run graph $\G=(V,v_0,E,f)$:
	
	If $\G$ satisfies $Acc$, then there exists a valid $|V|$-bounded annotation function~$\lambda$ for each basic comparison relation in $\triangleright^F_S$.
\end{lemma}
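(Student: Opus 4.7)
My plan is to exhibit, for each basic comparison relation $\triangleright^{F,i}_S$ indexed by $(A_i,G_i) \in F$, a concrete $|V|$-bounded annotation function $\lambda_i$ built from a combinatorial ``length of a $G_i$-avoiding witness'' on the run graph. For a reachable vertex $v \in V$, I define
\begin{equation*}
\lambda_i(v) \;=\; \max\bigl\{ \,|\{\,j \le n : f(\pi_j) \in A_i \setminus G_i\}| \;:\; \pi_0\pi_1\cdots\pi_n \text{ is a finite path in } \G \text{ with } \pi_0 = v \text{ and } f(\pi_j) \notin G_i \text{ for all } j \le n \bigr\}.
\end{equation*}
For unreachable vertices I set $\lambda_i(v) := 0$; such values are harmless because validity only constrains edges whose source is reachable.

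The first key step is to show $\lambda_i(v) \le |V|$ for every reachable $v$. Suppose, towards a contradiction, that some reachable $v^\ast$ admits a $G_i$-avoiding witness path containing strictly more than $|V|$ occurrences of $A_i \setminus G_i$. Then two such occurrences sit at the same vertex $u$ by pigeonhole, producing a $G_i$-free cycle through $u \in A_i$. Concatenating a finite $v_0$-to-$v^\ast$ path, then a prefix of the witness reaching $u$, and finally looping on this cycle forever gives an infinite path from $v_0$ in $\G$ that hits $A_i$ infinitely often and $G_i$ only finitely often, contradicting $\G \models \text{Streett}(F)$.

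The second step is to verify validity of $\lambda_i$ across an arbitrary edge $(v,v') \in E$ with $v$ reachable. If $f(v) \in G_i$ there is nothing to prove by definition of $\triangleright^{F,i}_S$. Otherwise $v'$ is reachable, and prepending $v$ to any $G_i$-free witness path from $v'$ yields a $G_i$-free witness path from $v$ with exactly one additional $A_i \setminus G_i$ vertex when $f(v) \in A_i \setminus G_i$, and the same count when $f(v) \notin A_i \cup G_i$. Taking the maximum on both sides gives $\lambda_i(v) \ge \lambda_i(v')+1$ in the former case and $\lambda_i(v) \ge \lambda_i(v')$ in the latter, matching the two nontrivial clauses of $\triangleright^{F,i}_S$ exactly.

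I expect the main obstacle to be not the core combinatorics, which is standard for Streett ranking arguments, but the housekeeping around infinite versus finite paths: one must justify that the set over which $\lambda_i(v)$ is defined as a maximum is actually finite (which is precisely what the $|V|$ bound of the first step delivers, since arbitrarily long $G_i$-avoiding paths from a reachable vertex would already produce the contradicting cycle), and one must be careful that the Streett acceptance of $\G$ is applied only to paths starting from $v_0$, which is why reachability of $v^\ast$ is indispensable in the pigeonhole step. Once these points are pinned down, iterating the construction over all $i \in [k]$ produces the required family of annotation functions, one for each basic comparison relation in $\triangleright^F_S$.
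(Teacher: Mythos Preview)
Your argument is correct and rests on the same core observation as the paper's proof---namely that if $\G$ satisfies the Streett condition, there can be no $G_i$-free cycle through a vertex of $A_i$---but you package it differently. The paper removes all outgoing edges from $G_i$-vertices, argues that the resulting graph $\G'$ has no nontrivial SCC touching $A_i$, forms the SCC condensation (a DAG), and sets $\lambda_i(v)$ to the maximal number of $A_i$-vertices along any DAG path from $v$'s component. You instead work directly on $\G$, define $\lambda_i(v)$ as the maximal number of $A_i\setminus G_i$ \emph{occurrences} on a $G_i$-avoiding finite walk from $v$, and bound this by $|V|$ via a one-line pigeonhole that manufactures the forbidden cycle. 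Your route is more elementary (no SCC machinery) and makes the $|V|$ bound immediate; the paper's structural decomposition perhaps makes the well-definedness of the maximum more transparent, since it is literally a max over finitely many DAG paths.

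One small housekeeping point: your displayed definition of $\lambda_i(v)$ is a maximum over the empty set when $v$ is reachable with $f(v)\in G_i$, since no path starting at such a $v$ is $G_i$-avoiding. You should explicitly set $\lambda_i(v):=0$ in that case (as you already do for unreachable vertices); this is harmless because the $G_i$ clause of $\triangleright^{F,i}_S$ is vacuously true, and your edge-validity argument in Step~2 already bypasses this case correctly. Similarly, in Step~2 when $f(v)\notin G_i$ but $f(v')\in G_i$, you cannot prepend $v$ to a witness path from $v'$ (there is none), but the required inequality follows directly from the singleton path $v$ and $\lambda_i(v')=0$.
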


\begin{theorem}\label{BST}
	Let $\G=(V,v_0,E,f)$ be a run graph, $Acc \subseteq Q^\omega$ a Büchi, co-Büchi or Streett acceptance condition expressed by the relation $\triangleright_X$ for $X\in\{B,C,S\}$.
	
	There exists a valid $|V|$-bounded annotation function $\lambda_i$ on $\G$ for each basic comparison relation $\triangleright_i$, if and only if $\G$ satisfies $Acc$.
\end{theorem}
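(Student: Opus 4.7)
The proof naturally splits into the two directions of the iff.

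For the ``$\Leftarrow$'' direction (satisfaction implies existence of annotations), \cref{generalStreett} already handles Streett acceptance. I would reduce the Büchi and co-Büchi cases to that lemma by viewing them as special Streett conditions: $\text{BÜCHI}(F)=\text{STREETT}(\{(Q,F)\})$ and $\text{CO-BÜCHI}(F)=\text{STREETT}(\{(F,\emptyset)\})$. A short unfolding of the definitions of $\triangleright^F_B$, $\triangleright^F_C$ and $\triangleright^F_S$ on these particular pair sets shows that each basic Büchi/co-Büchi relation coincides with the single-index Streett relation obtained from the corresponding pair (observing that $f(v)\in Q$ trivially and $f(v)\in\emptyset$ never), so any valid annotation supplied by \cref{generalStreett} for $\triangleright^F_S$ is simultaneously valid for $\triangleright^F_B$ or $\triangleright^F_C$.

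For the ``$\Rightarrow$'' direction I would fix valid $|V|$-bounded annotations $\lambda_i$ and show that every path $\pi = \pi_0\pi_1\ldots \in \G$ satisfies $f(\pi)\in Acc$. In all three cases the argument is a well-foundedness pigeonhole on $D_{|V|}$. For Büchi, assume for contradiction that $F$ is hit only finitely often along $\pi$. Past the last visit to $F$, the validity of $\lambda$ forces $\lambda(\pi_j)>\lambda(\pi_{j+1})$ for every subsequent index, producing an infinite strictly descending chain in $D_{|V|}$. For co-Büchi, each visit to $F$ strictly decreases $\lambda$ while non-visits allow only non-increase, so there can be at most $|V|+1$ such visits in total, yielding $\text{Inf}(f(\pi))\cap F=\emptyset$. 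For Streett, I fix an index $i$ and suppose $\text{Inf}(f(\pi))\cap A_i\neq\emptyset$ while $\text{Inf}(f(\pi))\cap G_i=\emptyset$; beyond the last visit to $G_i$ the relation $\triangleright^{F,i}_S$ makes $\lambda_i$ monotonically non-increasing with a strict decrease at every visit to $A_i$, and infinitely many such strict decreases bounded in $D_{|V|}$ is impossible.

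The main obstacle is bookkeeping rather than novelty: the Streett half of the ``$\Leftarrow$'' direction is delegated to \cref{generalStreett}, and the forward direction is a uniform pigeonhole argument. The only genuinely delicate point is checking that the Büchi and co-Büchi encodings as single-pair Streett conditions line up exactly with the relations $\triangleright^F_B$ and $\triangleright^F_C$ as defined, so that no separate proof for these two cases is needed. Once that case split is written out cleanly, the theorem follows by combining \cref{generalStreett} with the three boundedness arguments above.
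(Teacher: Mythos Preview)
Your proposal is correct. The ``$\Leftarrow$'' direction matches the paper exactly: the paper also defers to \cref{generalStreett} and treats B\"uchi and co-B\"uchi as special cases of Streett (the text preceding the lemma says so explicitly).

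For ``$\Rightarrow$'' the paper takes a more abstract shortcut than you do. The theorem statement assumes that $\triangleright_X$ \emph{expresses} $Acc$ in the technical sense defined earlier in the section (a path satisfies $\triangleright$ iff its $f$-projection lies in $Acc$). The paper's proof then simply observes that a valid global annotation $\lambda_i$ witnesses $\pi\models\triangleright$ for every path $\pi\in\G$, and invokes the ``expresses'' hypothesis to conclude $f(\pi)\in Acc$---no case split, no pigeonhole, three lines total. Your explicit descending-chain arguments for each of the three acceptance conditions are essentially an inline proof of one direction of the ``expresses'' property rather than taking it as given. This makes your argument more self-contained (the paper never actually verifies that $\triangleright^F_B$, $\triangleright^F_C$, $\triangleright^F_S$ express their respective conditions), at the cost of three separate well-foundedness cases that the paper collapses into a single appeal to a definition.
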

\begin{proof}
	$"\Implies":$ Let $\G$, $Acc$, $\triangleright$ with arity $|\triangleright|=n$ and $c$ be given and $\lambda_i$ be a valid $c$-bounded annotation comparison relation on $\G$ for $\triangleright_i$ for all $i\in[n]$. Let $\pi=\pi_0\pi_1\hdots\in\G$ be an arbitrary path in $\G$ and $i\in[n]$. Since $\lambda_i$ is a valid annotation function, $\lambda_i(\pi_0)\triangleright_i\lambda_i(\pi_1)\triangleright_i\hdots$ holds and therefore $\pi\models\triangleright$. Since $\triangleright$ expresses $Acc$ it follows that $f(\pi)\in Acc$, i.e., $\G$ satisfies $Acc$.\\
	$"\Leftarrow":$ \cref{generalStreett}. \qed
\end{proof}

\subsection{General Encoding}

We showed that the run graph satisfies an acceptance condition $Acc$, iff the implementation is accepted by the automaton. We also proved that the satisfaction of $Acc$ by a run graph can be expressed by the existence of valid annotation functions.

We encode these constraints in SAT. The valid implementation can then be extracted from the satisfied encoding.
Note that in our definition of program trees the structure was implicitly expressed by the nodes and for the encoding we need to express them explicitly. Therefore, the structure of the tree is encoded with successor functions $L$ and $R$, expressing the left and right child of a node, respectively. We encode the program tree and the annotation function as uninterpreted functions as explained in the following. We introduce the following variables for arbitrary two-way automata~$\A$, program trees $\T$, bounds $c$ and annotation comparison relations $\triangleright$:
\begin{itemize}
	\item $\tau_t$ encodes label $l$ of $t$ with $\log(|\Sigma|)$ many variables, notated as $\tau_t \equiv l$
	\item $L_t$ iff $t$ has left child (implicitly the next program state $t+1$)
	\item $R_t$ encodes right the child of $t\in T$ with $\log(|T|)$ many variables
	\item $\lambda_{p,t,d}^\mb{B}$ iff state $(p,t,d)$ is reachable in the run graph
	\item $\lambda^\#_{i,p,t,d}$ encodes the $i$-th annotation of state $(p,t,d)$ with $\log(c)$ many variables. We omit the index $i$ in the encoding
\end{itemize}	

The SAT formula $\Phi^{\A,\triangleright}_\T$ consists of the following constraints:

\begin{itemize}
	\item The initial state is reachable and all annotations fulfill the given bound:
	$$\lambda^\mb{B}_{p_0,t_0,D} \wedge \bigwedge_{\substack{p\in P,\\t\in T,\\d\in\{L,R,D\}}} \lambda^\#_{p,t,d} \leq c$$
	\item Bounded synthesis encoding
	$$\bigwedge_{\substack{p\in P,\\ t\in T,\\ d\in D}} \lambda_{p,t,d}^\mb{B} \rightarrow \bigwedge_{\substack{\sigma\in \Sigma}} (\tau_t \equiv \sigma) \rightarrow \bigwedge_{\substack{(p',d'')\in\delta(p,\sigma,d),\\ t'\in T, \\ (\varphi, d')\in \mu'(t,d'',t')}} \varphi\; \rightarrow \; \lambda_{p',t',d'}^\mb{B} \wedge \lambda_{p,t,d}^\# \triangleright \lambda_{p',t',d'}^\#$$
	$\mu' : T \times D' \times T \rightarrow [\mb{B}(L_t,R_t) \times D]$ returns a list of pairs $(\varphi, d')$, where the formula $\varphi$ enforces the tree structure needed to reach $p',t',d'$.
\end{itemize}

The encoding checks whether universal properties in the run graph hold. Note that we need to additionally forbid walking up from the root node, which is omitted here.

\begin{theorem}
	Given a two-way universal tree automaton $\A$ with a Büchi, co-Büchi or Streett acceptance condition $Acc$ expressed by $\triangleright$ and a bound $c\in \mb{N}$. The constraint system $\Phi^{\A,\triangleright}_\T$ is satisfiable, iff there is a program tree $\T$ with size $|\T|\leq\lfloor c/|\A|\rfloor$ that is accepted by $\A$.
\end{theorem}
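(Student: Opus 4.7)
The plan is to prove the biconditional by reducing each direction to \cref{BST}, treating the SAT formula $\Phi^{\A,\triangleright}_\T$ essentially as a propositional rendering of ``there exists a program tree whose run graph admits valid $c$-bounded annotations for $\triangleright$.''

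For the ``$\Leftarrow$'' direction, I would take a program tree $\T$ of size $|\T|\le \lfloor c/|\A|\rfloor$ that is accepted by $\A$. By definition of acceptance this means every path of $\G^\A_\T=(V,v_0,E,f)$ satisfies $Acc$. \cref{BST} then yields, for each basic comparison relation $\triangleright_i$ in $\triangleright$, a valid $|V|$-bounded annotation $\lambda_i$ on $\G^\A_\T$; the bound on $|\T|$ is exactly calibrated so that $|V| = |P|\cdot|T|\cdot|\{L,R,D\}| \le c$, hence the $\lambda_i$ are also $c$-bounded. From this data I would read off a satisfying assignment: set $\tau_t$, $L_t$, $R_t$ to describe the labeling and structure of $\T$; set $\lambda^\mb{B}_{p,t,d}$ to true iff $(p,t,d)$ is reachable in $\G^\A_\T$; and set $\lambda^\#_{i,p,t,d}:=\lambda_i(p,t,d)$ (with an arbitrary value on unreachable vertices). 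The initial clause holds since $(p_0,t_0,D)$ is reachable and all annotations lie in $D_c$; the main implication holds because whenever $(p,t,d)$ is reachable and the tree-structure witness $\varphi$ for some successor fires, that successor is by construction an edge in $\G^\A_\T$, so it is reachable and the comparison $\lambda^\#_{p,t,d}\triangleright\lambda^\#_{p',t',d'}$ follows from validity of $\lambda_i$.

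For the ``$\Rightarrow$'' direction, I would take a satisfying assignment and extract $\T=(T',\tau)$ from the $\tau_t, L_t, R_t$ variables, obtaining $|\T|\le|T|\le\lfloor c/|\A|\rfloor$ immediately from the fact that the encoding ranges over a fixed index set of that size. Define $\lambda_i(p,t,d):=\lambda^\#_{i,p,t,d}$ for each $i$. A short induction along the reachability relation of $\G^\A_\T$, starting from the initial clause $\lambda^\mb{B}_{p_0,t_0,D}$ and using the main implication, shows that every reachable vertex $v$ in $\G^\A_\T$ satisfies $\lambda^\mb{B}_v$, and that for every run-graph edge $(v,v')$ with $v$ reachable one has $\lambda_i(v)\triangleright_i\lambda_i(v')$. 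Thus $\lambda_i$ is a valid $c$-bounded annotation on $\G^\A_\T$ for each $\triangleright_i$, and \cref{BST} delivers that $\G^\A_\T$ satisfies $Acc$, i.e., $\T$ is accepted by $\A$.

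The routine part is the translation between the ``semantic'' annotation of the run graph and its propositional encoding. The main obstacle is keeping the bookkeeping honest: the encoding's $T$ is a fixed index set, whereas $\T$ is a prefix-closed subset of $\{L,R\}^*$, so one must verify that the assignment to $L_t,R_t$ really induces a well-defined program tree, and that the reachability predicate $\lambda^\mb{B}$ tracks precisely the reachable vertices of the resulting $\G^\A_\T$ — this is what allows the validity condition of \cref{BST}, which quantifies only over reachable vertices, to be faithfully captured by an implication guarded by $\lambda^\mb{B}$.
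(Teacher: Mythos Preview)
Your proposal is correct and follows essentially the same approach as the paper: both directions are reduced to \cref{BST}, with the satisfying assignment and the valid annotation functions translated back and forth in the obvious way (setting $\lambda^\mb{B}$ on reachable vertices, reading $\lambda_i$ off $\lambda^\#_i$). You supply more detail than the paper's rather terse argument---in particular the reachability induction and the remarks on recovering a well-formed tree from $L_t,R_t$---but the underlying strategy is identical.
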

\begin{proof}
  $"\Implies":$ Let $\T$ be accepted by $\A$, then with \cref{BST} there exists a valid annotation function $\lambda_i$ on $\G$ for each $i\in[|\triangleright|]$. Let $\lambda_i$ be represented by $\lambda_i^\#$ and $\lambda_i^\mb{B}$ be $true$ for all reachable states in the run graph $\G$. Then $\Phi^{\A,\triangleright}_\T$ is satisfied.
	
  $"\Leftarrow":$ Let $\Phi^{\A,\triangleright}_\T$ be satisfied. Then there exists a valid annotation function $\lambda_i$ encoded by $\lambda_i^\#$ for each $i\in[|\triangleright|]$ (set $\lambda_i(v)=0$ for all unreachable states $v$, i.e., where $\lambda_i^\#(v)$ is $false$) that satisfies the encoding. With \cref{BST} the acceptance of $\T$ by $\A$ follows. \qed
\end{proof}

Utilizing this theorem, we now can by means of the encoding $\Phi^{\B'}_S$ synthesize program trees accepted by $\B'$, i.e., precisely those program trees, which correspond to reactive programs that satisfy the given specification the automaton was constructed with.

\begin{corollary}
	The SAT encoding $\Phi^{\B'}_S$ is satisfiable, if and only if there exists a program tree $\T$ with size $|\T|\leq\lfloor c/|\B'|\rfloor$ accepted by $\B'$.
\end{corollary}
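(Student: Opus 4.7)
The plan is to read the corollary as a direct specialization of the preceding theorem rather than a genuinely new claim. The hypotheses to discharge are three: (i) $\B'$ is a two-way universal tree automaton, (ii) its acceptance condition is of one of the three admitted kinds (Büchi, co-Büchi, or Streett), and (iii) that condition is expressed by the comparison relation used in the encoding. Steps (i) and (ii) were already established in Section~3, where $\B'$ was constructed as a two-way universal Streett automaton with $\text{STREETT}(F^{\B'})$ and $F^{\B'}=\{(F^\B,\emptyset),(P^\B,F^\B_{reactive})\}$. For step (iii), the Streett comparison relation $\triangleright^{F^{\B'}}_S$ was defined in the preliminaries precisely so as to express any Streett acceptance condition on any run graph, hence in particular this one; this is also the content of \cref{BST} applied to the Streett case.

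With the hypotheses in place, I would set $\A:=\B'$ and $\triangleright:=\triangleright^{F^{\B'}}_S$ in the preceding theorem, observing that the notation $\Phi^{\B'}_S$ is merely shorthand for $\Phi^{\B',\triangleright^{F^{\B'}}_S}_\T$. The preceding theorem then yields the claimed biconditional verbatim, with the bound $\lfloor c/|\B'|\rfloor$ on $|\T|$ coming from the generic bound $\lfloor c/|\A|\rfloor$. The only point that warrants a brief sanity check is that $|\triangleright^{F^{\B'}}_S|=2$, so the encoding must carry two independent annotation families, one per pair in $F^{\B'}$; this is already accounted for by the index $i$ in the variable $\lambda^\#_{i,p,t,d}$ and by the outermost conjunction in the first constraint of $\Phi^{\A,\triangleright}_\T$.

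There is no substantive obstacle in this proof: the heavy lifting is done by \cref{BST} (for the annotation-based characterization of acceptance on arbitrary run graphs) and by the preceding theorem (for lifting that characterization to satisfiability of the SAT encoding on program trees). The corollary is essentially an exercise in unfolding notation and confirming that the Streett-arity bookkeeping goes through. Consequently, I would write the proof in a single sentence appealing to the preceding theorem with the stated instantiation.
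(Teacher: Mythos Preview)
Your proposal is correct and matches the paper's approach: the paper states the corollary without proof, prefacing it only with ``Utilizing this theorem\ldots'', i.e., it is treated as an immediate instantiation of the preceding theorem with $\A:=\B'$ and the Streett comparison relation. Your additional observations about checking that $\B'$ is two-way universal Streett and that the arity-$2$ bookkeeping is handled by the indexed annotation variables are accurate but already more detailed than what the paper provides.
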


\subsubsection{Size of construction}
The automaton can be constructed of size $O(2^{|B|+|\varphi|})$, i.e., for a fixed set of Boolean variables the automaton is linear in the size of the specification automaton or exponential in the size of the specification formula. The constructed constraint system $\Phi^{\B'}_S$ is of size $O(|T|\cdot|\delta|\cdot|\Sigma_P|)$ with $x$ many variables, where $x\in O(|T|\cdot(|T|+|\Sigma_P|+|Q|\cdot log(|Q|\cdot|T|)))$. Note that $|\Sigma_P|\in O(|B|^{N_\I+N_\O})$ grows polynomial in the number of variables for fixed input/output arities.

\section{Two-Wayless Encoding}
\label{twlencoding}
Next, we sketch the second encoding that avoids the detour
via universal two-way automata.
To this end, we alter the construction in that input- and output-labels collapse to a single \textit{InOut}-label with semantics as follows
\vspace{-0.2em}
\begin{center}
  \scalebox{.9}{\begin{tikzpicture}[auto,semithick
					virtual/.style={thick,densely dashed},
					active/.style={thick,->,shorten >=2pt,shorten <=2pt,>=stealth},
					inactive/.style={color=lightgray,->,shorten >=2pt,shorten <=2pt,>=stealth},
					none/.style={draw=none}
					]
					\tikzstyle{every state}=[rectangle,rounded corners,fill=none,draw=black,text=black,initial text=]
					\tikzstyle{block} = [rectangle,draw=black,text=black,initial text=, minimum size=.6cm]
					
					\node[state] 	(A) at (0,0) {\textbf{InOut}};
					\node[block]    [align=left, anchor=west,xshift=.3cm](B) at (A.east) {$\{\vec{i}\rightarrow \vec{val}, q\rightarrow q' \mid \vec{val}\in\mb{B}^{N_\I}, q'\in\delta(q,\vec{i},s[\vec{o}])\}$};
					
					\draw[active] ([xshift=.08cm]A.north) -- (0.08,1.2);
					\draw[active] (-0.08,1.2) -- ([xshift=-.08cm]A.north);
					
					\draw[none] (A.217) -- (-1.05,-.95);
					\draw[none] (-.95,-1.05) -- (A.233);
					
					\draw[none] (A.307) -- (.95,-1.05);
					\draw[none] (1.05,-.95) -- (A.323);
					\end{tikzpicture}}
\end{center}
\vspace{-1.2em}
where we use output variables $ \vec{o} $ and input variables
$ \vec{i} $ that correspond to inputs and outputs of the system,
respectively. In a nutshell, our new encoding consists of four parts:

\begin{enumerate}

\item The first part guesses the program and ensures syntactical
  correctness.

\item The second part simulates the program for every possible input
  from every reachable \textit{InOut}-labeled state until it again
  reaches the next \textit{InOut}-labeled state. Note that every such
  simulation trace is deterministic once the input, read at the initial
  \textit{InOut}-labeled state, has been fixed.

\item The third part extracts a simplified transition structure from
  the resulting execution graph, that consists of direct input labeled
  transitions from one \textit{InOut}-labeled state to the next one
  and output labeled states.

\item In the last part, this structure is then verified by a run graph
  construction that must satisfy the specification, given as universal co-Büchi
  automaton. To this end, we couple inputs on the edges with
  the outputs of the successor state to preserve the Mealy semantics
  of the program.
  
\end{enumerate}

The first part utilizes a similar structure as used for the previous
encoding and thus is skipped for convenience here. To simulate the
program in the second part, we introduce the notion of a
\textit{valuation}~$ v \in \mc{V} $, where
\begin{equation*}
\mc{V} = P \times \mb{B}^{B \smallsetminus \mathcal{I}} \times \{ L,
R, U \} \times \mb{B}
\end{equation*}
captures the current program state, the current values of all
non-input variables, the current direction, and the result of the
evaluation of the last Boolean expression, respectively. The
simulation of the program is then expressed by a finite execution
graph, in which, after fixing a
inputs~$ \vec{i} \in 2^{\mathcal{I}} $, every valuation points to a
successor valuation. This successor valuation is unique, except for
\textit{InOut}-labeled states, whose successor depends on the next
input to be read. The deterministic evaluation follows from the rules
of Figure~\ref{semantics} and selects a unique successor for every
configuration, accordingly.

In part three, this expression graph then is compressed into a
simplified transition structure. To this end, we need for every input
and \textit{InOut}-labeled starting valuation, the target
\textit{InOut}-labeled valuation that is reached as a result of the
deterministic evaluation. In other words, we require to find a
shortcut from every such valuation to the next one. We use an
inductive chain of constraints to determine this shortcut
efficiently. Remember that we only know the unique successor of every
valuation which only allows to make one step forward at a time. Hence,
we can store for every valuation and input a second shortcut
successor, using an additional set of variables, constrainted as
follows: if the evaluated successor is \textit{InOut}-labeled, then
the shortcut successor must be the same as the evaluated
one. Otherwise, it is the same as the shortcut successor of the
successor valuation, leading to the desired inductive
definition. Furthermore, to ensure a proper induction base, we use an
additional ranking on the valuations that bounds the
number of steps between two \textit{InOut} labeled valuations. This
annotation is realized in a similar fashion as in the previously
presented encoding.

With these shortcuts at hand, we then can extract the simplified
transition structure, which is verified using a standard run graph
encoding as used for classical bounded synthesis. Furthermore, we use
an over-approximation to bound the size of the struture and use a
reachability annotation that allows the solver to reduce the
constraints to those parts as required by the selected solution. The
size can, however, also be bound using an explicit bound that is set
manually.

Using this separation into four independent steps allows to keep the
encoding compact in size, and results in the previously promised
performance improvements presented in the next section.

\section{Experimental Results}
\begin{table}[t]
  \caption{Comparison of the general and the two-wayless encoding.}
  \label{results}
	\centering
	\begin{tabular}{l|c|c|c|c|c}
		\hline
          \multicolumn{1}{c|}{\multirow{2}{*}{\, \textsc{specification} \,}}
          & \multirow{2}{*}{\, \textsc{states} \,}
          & \, \textsc{additional} \,
          & \multirow{2}{*}{\, $|\B'|$ \,}
          & \textsc{two-way}
          & \, \textsc{two-wayless} \, 
          \\
          && \textsc{variables} && \, \textsc{encoding} \, & \textsc{encoding} \\
	\hline \hline
		\textit{in} $\leftrightarrow$ \textit{out} & 6 & 0 & 16 & 00m16s & 00m02s \\
		\textit{in} $\leftrightarrow\LTLnext$ \textit{out} & 9 & 1 & 64 & 11m29s & 08m34s\\
		latch & 10 & 0 & 64 & $>$120m & 08m07s \\
          2-bit arbiter & 10 & 0 & 128 & 66m48s & 14m18s \\
          \hline
	\end{tabular}
\end{table}

\cref{results} compares the general encoding of \cref{general} and the
two-wayless encoding of \cref{twlencoding} on a selection of standard
benchmarks. The table contains the of number of states of the
program's syntax tree, the number of additional variables, i.e.,
variables that are not designated to handle inputs and outputs, the
size of the two-way universal Streett automaton, created for the
general encoding, and the solving times for both encodings.
\cref{implementations} shows the results in terms of the synthesized
program trees for the two-wayless encoding. The experiments indicate
a strong advantage of the second approach.

\begin{table}[t]
  \lstset{frame=none}
	\caption{Synthesized implementations for the two-wayless encoding.}
	\centering
        \label{implementations}
        \begin{tabular}{c|c|c|c}
          \hline
          \textit{in} $ \leftrightarrow $ \textit{out} & \textit{in} $ \leftrightarrow \LTLnext $ \textit{out} & latch & 2-bit arbiter \\
          \hline\hline
	\begin{minipage}{0.23\textwidth}
		\begin{lstlisting}[mathescape=true]
while(tt) {
  out = in;
  InOut
}
€\phantom{=}€
€\phantom{=}€
€\phantom{=}€
€\phantom{=}€
\end{lstlisting}
\end{minipage}
          &
          \ \begin{minipage}{0.23\textwidth}
		\begin{lstlisting}[mathescape=true]
while(tt) {
  out = var;
  var = in;
  InOut
}
€\phantom{=}€
€\phantom{=}€
€\phantom{=}€
		\end{lstlisting}
              \end{minipage}&
          \ \begin{minipage}{0.23\textwidth}
		\begin{lstlisting}[mathescape=true]
while (tt) {
  if (upd) {
    out = in
  } else {
    skip
  };
  InOut
}
		\end{lstlisting}
              \end{minipage} &
          \ \begin{minipage}{0.23\textwidth}
		\begin{lstlisting}[mathescape=true]
while (tt) {
  g0 = g1;
  g1 = not g1;
  InOut
}
€\phantom{=}€
€\phantom{=}€
€\phantom{=}€
		\end{lstlisting}
              \end{minipage} \\                             
                               \hline
	\end{tabular}        
\end{table}

\section{Conclusions}
We introduced a generalized approach to bounded synthesis that is applicable whenever all possible runs of a universal automaton on the possibly produced input/output words of an input-deterministic implementation can be expressed by a run graph. The acceptance of an implementation can then be expressed by the existence of valid annotation functions for an annotation comparison relation that expresses the acceptance of the automaton for Büchi, co-Büchi and Streett acceptance conditions. The existence of valid annotation functions for a run graph is encoded as a SAT query that is satisfiable if and only if there exists an implementation satisfying a given bound that is accepted by the automaton.

For LTL specifications, we constructed a two-way universal Streett automaton which accepts reactive programs that satisfy the specification. We then constructed a run graph that represents all possible runs and applied the generalized bounded synthesis approach. Next, we constructed a SAT query that guesses a reactive program of bounded size as well as valid annotation functions that witnesses the correctness of the synthesized program.

Finally, we merged the previous transformations into an extended encoding that simulates the program direclty via the constraint solver. We evaluated both encodings with the clear result that the encoding avoiding the explicit run graph construction for two-way automata wins in the evaluation.	

\bibliographystyle{splncs}
\bibliography{bsorp}

\begin{thebibliography}{10}

\bibitem{2017:Gerstacker}
Gerstacker, C.:
\newblock {B}ounded {S}ynthesis of {R}eactive {P}rograms (2017) Bachelor's
  Thesis.

\bibitem{Church/57/Applications}
Church, A.:
\newblock Application of recursive arithmetic to the problem of circuit
  synthesis.
\newblock Journal of Symbolic Logic \textbf{28}(4) (1963)  289--290

\bibitem{Jobstm07c}
Jobstmann, B., Galler, S.J., Weiglhofer, M., Bloem, R.:
\newblock {A}nzu: {A} {T}ool for {P}roperty {S}ynthesis.
\newblock In Damm, W., Hermanns, H., eds.: Computer Aided Verification, 19th
  International Conference, {CAV} 2007, Berlin, Germany, July 3-7, 2007,
  Proceedings. Volume 4590 of Lecture Notes in Computer Science., Springer
  (2007)  258--262

\bibitem{DBLP:conf/tacas/Ehlers11}
Ehlers, R.:
\newblock {U}nbeast: {S}ymbolic {B}ounded {S}ynthesis.
\newblock In Abdulla, P.A., Leino, K.R.M., eds.: Tools and Algorithms for the
  Construction and Analysis of Systems - 17th International Conference, {TACAS}
  2011, Held as Part of the Joint European Conferences on Theory and Practice
  of Software, {ETAPS} 2011, Saarbr{\"{u}}cken, Germany, March 26-April 3,
  2011. Proceedings. Volume 6605 of Lecture Notes in Computer Science.,
  Springer (2011)  272--275

\bibitem{DBLP:conf/cav/BohyBFJR12}
Bohy, A., Bruy{\`{e}}re, V., Filiot, E., Jin, N., Raskin, J.:
\newblock Acacia+, a {T}ool for {L}{T}{L} {S}ynthesis.
\newblock In Madhusudan, P., Seshia, S.A., eds.: Computer Aided Verification -
  24th International Conference, {CAV} 2012, Berkeley, CA, USA, July 7-13, 2012
  Proceedings. Volume 7358 of Lecture Notes in Computer Science., Springer
  (2012)  652--657

\bibitem{DBLP:conf/cav/FaymonvilleFT17}
Faymonville, P., Finkbeiner, B., Tentrup, L.:
\newblock {B}o{S}y: {A}n {E}xperimentation {F}ramework for {B}ounded
  {S}ynthesis.
\newblock  \cite{conf/cav/2017-2}  325--332

\bibitem{Bloem+others/07/Automatic}
Bloem, R., Galler, S.J., Jobstmann, B., Piterman, N., Pnueli, A., Weiglhofer,
  M.:
\newblock Interactive presentation: {A}utomatic hardware synthesis from
  specifications: a case study.
\newblock In Lauwereins, R., Madsen, J., eds.: 2007 Design, Automation and Test
  in Europe Conference and Exposition, {DATE} 2007, Nice, France, April 16-20,
  2007, {EDA} Consortium, San Jose, CA, {USA} (2007)  1188--1193

\bibitem{2013:Schewe:BS}
Finkbeiner, B., Schewe, S.:
\newblock {B}ounded {S}ynthesis.
\newblock {STTT} \textbf{15}(5-6) (2013)  519--539

\bibitem{2016:Klein:BCS}
Finkbeiner, B., Klein, F.:
\newblock {B}ounded {C}ycle {S}ynthesis.
\newblock In Chaudhuri, S., Farzan, A., eds.: Computer Aided Verification -
  28th International Conference, {CAV} 2016, Toronto, ON, Canada, July 17-23,
  2016, Proceedings, Part {I}. Volume 9779 of Lecture Notes in Computer
  Science., Springer (2016)  118--135

\bibitem{DBLP:conf/tacas/FaymonvilleFRT17}
Faymonville, P., Finkbeiner, B., Rabe, M.N., Tentrup, L.:
\newblock {E}ncodings of {B}ounded {S}ynthesis.
\newblock In Legay, A., Margaria, T., eds.: Tools and Algorithms for the
  Construction and Analysis of Systems - 23rd International Conference, {TACAS}
  2017, Held as Part of the European Joint Conferences on Theory and Practice
  of Software, {ETAPS} 2017, Uppsala, Sweden, April 22-29, 2017, Proceedings,
  Part {I}. Volume 10205 of Lecture Notes in Computer Science. (2017)  354--370

\bibitem{conf/fmcad/AlurBJMRSSSTU13}
Alur, R., Bod{\'{\i}}k, R., Juniwal, G., Martin, M.M.K., Raghothaman, M.,
  Seshia, S.A., Singh, R., Solar{-}Lezama, A., Torlak, E., Udupa, A.:
\newblock Syntax-guided synthesis.
\newblock In: Formal Methods in Computer-Aided Design, {FMCAD} 2013, Portland,
  OR, USA, October 20-23, 2013, {IEEE} (2013)  1--8

\bibitem{conf/popl/Gulwani11}
Gulwani, S.:
\newblock Automating string processing in spreadsheets using input-output
  examples.
\newblock In Ball, T., Sagiv, M., eds.: Proceedings of the 38th {ACM}
  {SIGPLAN-SIGACT} Symposium on Principles of Programming Languages, {POPL}
  2011, Austin, TX, USA, January 26-28, 2011, {ACM} (2011)  317--330

\bibitem{osera2015type}
Osera, P., Zdancewic, S.:
\newblock Type-and-example-directed program synthesis.
\newblock In Grove, D., Blackburn, S., eds.: Proceedings of the 36th {ACM}
  {SIGPLAN} Conference on Programming Language Design and Implementation,
  Portland, OR, USA, June 15-17, 2015, {ACM} (2015)  619--630

\bibitem{solarLezama13}
Solar{-}Lezama, A.:
\newblock Program sketching.
\newblock {STTT} \textbf{15}(5-6) (2013)  475--495

\bibitem{vechevYY13}
Vechev, M.T., Yahav, E., Yorsh, G.:
\newblock Abstraction-guided synthesis of synchronization.
\newblock {STTT} \textbf{15}(5-6) (2013)  413--431

\bibitem{2011:Madhusudan:SRP}
Madhusudan, P.:
\newblock {S}ynthesizing {R}eactive {P}rograms.
\newblock In Bezem, M., ed.: Computer Science Logic, 25th International
  Workshop / 20th Annual Conference of the EACSL, {CSL} 2011, September 12-15,
  2011, Bergen, Norway, Proceedings. Volume~12 of LIPIcs., Schloss Dagstuhl -
  Leibniz-Zentrum fuer Informatik (2011)  428--442

\bibitem{1994:Vardi:IC}
Vardi, M.Y., Wolper, P.:
\newblock {R}easoning {A}bout {I}nfinite {C}omputations.
\newblock Inf. Comput. \textbf{115}(1) (1994)  1--37

\bibitem{2017:Khalimov:BS}
Khalimov, A., Bloem, R.:
\newblock {B}ounded {S}ynthesis for {S}treett, {R}abin, and {C}{T}{L}$^{*}$.
\newblock  \cite{conf/cav/2017-2}  333--352

\bibitem{conf/cav/2017-2}
Majumdar, R., Kuncak, V., eds.:
\newblock Computer Aided Verification - 29th International Conference, {CAV}
  2017, Heidelberg, Germany, July 24-28, 2017, Proceedings, Part {II}.
\newblock Volume 10427 of Lecture Notes in Computer Science., Springer (2017)

\end{thebibliography}

\newpage

\appendix

\section{Appendix}

\subsection{Equivalence of $\A$ and $\B$}\label{appendixEquivalenceProof}
To prove the equivalence, i.e., $\mc{L}(\A) = \mc{L}(\B)$, we need the formal definition of acceptance of trees by two-way tree automata and of the two-way automata $\A$ and $\B$. We fix the specification to be given as a non-deterministic Büchi word automaton $A_{spec} = (\Sigma, Q_{spec}, q_0, \delta_{spec}, \text{BÜCHI}(F_{spec}))$.

\subsubsection{Acceptance of two-way tree automata}\label{explicitA}

The acceptance is defined over run trees as follows. A run of a two-way alternating tree automaton $\A=(\Sigma,P ,p_0,\delta,Acc)$ on a program tree $\T = \tuple{T, \tau}$ is an infinite labeled binary tree $\T_R=\tuple{T_R, \tau_R}$, where $\tau_R: T_R \rightarrow T \times P \times \{L,R,D \}$ is a labeling function, which labels are tuples that contain the current state of the automaton and program tree as well as the direction it came from. For $\T_R$ it holds that
\begin{itemize}
	\item $\epsilon \in T_R$, $\tau_R(\epsilon)=(\epsilon, p_0, D)$ and
	\item $\forall y\in T_R$ with $\tau_R(y) = (t, p, d)$ and $\delta_t (p, \tau(t), d)=\theta:$\\
	Let $S=\big\{ (p_1,d_1),\dots,(p_n,d_n) \big\} \subseteq Q \times \{L,R,U\}$  be a set that satisfies $\theta$. Then $\forall 1\leq i\leq n: y \cdot i \in T_R$ and $\tau_R(y \cdot i) = \big( t', p_i, d' \big)$ with $(t', d') = \mu (t, d_i)$.
\end{itemize}

\subsubsection{Two-way alternating Büchi tree automaton $\A$}
The constructed two-way alternating Büchi automaton $\A$ over the alphabet $\Sigma_P$ with the set of states $P^\A$,
$$ P^\A_{expr} = S \times Bool $$
$$ P_{exec} = S \times Q_{spec} \times \mb{B}^{N_\I} \times \{inp, out\} \times Bool $$
$$ P^\A =  P^\A_{expr} \cup P_{exec} $$
and initial state $p^\A_0 = (s_0, q_0, i_0, inp, 0)$, is defined with the following transitions, where $s\in S$, $v\in Bool$, $q\in Q_{spec}$, $i\in \mb{B}^{N_\I}$, $m\in\{inp,out\}$ and $t\in\{0,1\}$.
\begin{itemize}
	\item \textbf{Transitions to evaluate Boolean expressions:}
	\begin{itemize}
		\item 
		$ \delta ^\A _{\emptyset} ((s,1), \textbf{tt}, D) = true \\
		\delta ^\A _{\emptyset} ((s,0), \textbf{tt}, D) = false$
		\item 
		$ \delta ^\A _{\emptyset} ((s,1), \textbf{ff}, D) = false \\
		\delta ^\A _{\emptyset} ((s,0), \textbf{ff}, D) = true$
		\item 
		$\delta ^\A _{\emptyset} ((s,v), b, D) = \begin{cases}true &,\text{if } s[b]=v \\
		false &,\text{otherwise}\end{cases}
		$
		\item 
		$ \delta ^\A _{LR} ((s,1), \vee, D)= ((s,1), L) \vee ((s,1), R) \\
		\delta ^\A _{LR} ((s,0), \vee, D)= ((s,0), L) \wedge ((s,0), R)$
		\item $ \delta ^\A _L ((s,v), \neg, D)= ((s,1-v),L)$
	\end{itemize}
	
	\item \textbf{Transitions to evaluate non I/O statements:}
	\begin{itemize}
		\item $ \delta ^\A _{\emptyset} ((s,q,i,m,t),\textbf{skip},D) = ((s,q,i,m,0), U)$
		\item $ \delta ^\A _L ((s,q,i,m,t),assign_b,D) =\\
		\big(((s[b/0],q,i,m,0),U) \wedge((s,0),L)\big)\;\vee\; \big( ((s[b/1],q,i,m,0),U) \wedge ((s,1),L)\big)$
		\item $ \delta ^\A _{LR} ((s,q,i,m,t),\textbf{if},D)=\\
		\big(((s,1),L) \wedge ((s,q,i,m,0), LR)\big) \; \vee \; \big(((s,0),L) \wedge ((s,q,i,m,0), RR)\big)$
		\item $ \delta ^\A _{LR} ((s,q,i,m,t),\textbf{while},D)\\
		=\delta ^\A _{LR} ((s,q,i,m,t),\textbf{while},R)\\
		= \big(((s,1),L) \wedge ((s,q,i,m,0), R)\big) \; \vee \; \big(((s,0),L) \wedge ((s,q,i,m,0), U)\big)$
	\end{itemize}
	\item \textbf{Transitions to evaluate input and output:}
	\begin{itemize}
		\item $\delta ^\A _{\emptyset} ((s,q,i,inp,t),\textbf{input }\vec{b},D) = \bigvee _{\vec{val}\in\mb{B}^{N_\mc{I}}}((s[\vec{b}/\vec{val}],q,\vec{val},out,0),U)$
		\item $\delta ^\A _{\emptyset} ((s,q,i,out,t),\textbf{output }\vec{b},D) = \bigvee _{q' \in \delta _{spec}(q,i,s[\vec{b}])} ((s,q',i,inp,1),U)$
	\end{itemize}
	\item \textbf{Transitions to move to next statement in program:}
	\begin{itemize}
		\item $ \delta ^\A _{LR} ((s,q,i,m,t),\textbf{;},D)= ((s,q,i,m,t),L)\\
		\delta ^\A _{LR} ((s,q,i,m,t),\textbf{;},L)= ((s,q,i,m,t),R)\\
		\delta ^\A _{LR} ((s,q,i,m,t),\textbf{;},R)= ((s,q,i,m,t),U) $
		\item $ \delta ^\A _{LR} ((s,q,i,m,t),\textbf{then},L)\\
		=\delta ^\A _{LR} ((s,q,i,m,t),\textbf{then},R)\\
		=((s,q,i,m,t),U)$
		\item $ \delta ^\A _{LR} ((s,q,i,m,t),\textbf{if},R) =((s,q,i,m,t),U) $
	\end{itemize}
\end{itemize}
All other transitions evaluate to $false$. The set of accepting states is defined as $$F^\A = \big\{ (s,q,i,m,1) \mid q \in F_{spec} \big\}$$

\subsubsection{Two-way non-deterministic Büchi tree automaton $\B$}\label{explicitB}

Our automaton $\B$ with the set of states
$$ P^\B_{expr} = S \times Q_{spec} \times \mb{B}^{N_\I} \times \{inp, out\} \times \{ \top , \bot \}$$
$$ P^\B =  P^\B_{expr}\cup P_{exec} $$
and initial state $p^\B_0 = (s_0, q_0, i_0, inp, 0)$, is defined with the following transitions, where $b\in B$, $s\in S$, $q\in Q_{spec}$, $i\in \mb{B}^{N_\I}$, $m\in\{inp,out\}$, $r\in\{\top,\bot\}$ and $t\in\{0,1\}$.

\begin{itemize}
	\item \textbf{Transitions to evaluate Boolean expressions:}
	\begin{itemize}
		\item
		$ \delta ^\B _{\emptyset} ((s,q,i,m,\bot), \textbf{tt}, D) = ((s,q,i,m,\top), U)$\\
		\item
		$ \delta ^\B _{\emptyset} ((s,q,i,m,\bot), \textbf{ff}, D) = ((s,q,i,m,\bot), U)$\\
		\item
		$\delta ^\B _{\emptyset} ((s,q,i,m,\bot), b, D) = \begin{cases}((s,q,i,m,\top), U) &,\text{if } s[b]=1 \\
		((s,q,i,m,\bot), U) &,\text{otherwise}\end{cases}$
		\item
		$ \delta ^\B _{LR} ((s,q,i,m,\bot), \vee, D)= ((s,q,i,m,\bot), L)
		\smallskip \\ 
		\delta ^\B _{LR} ((s,q,i,m,r), \vee, L)= \begin{cases}((s,q,i,m,r), U) &,\text{if } r = \top \\
		((s,q,i,m,r), R) &,\text{otherwise}\end{cases}\\
		\delta ^\B _{LR} ((s,q,i,m,r), \vee, R)= ((s,q,i,m,r),U)$
		\item $ \delta ^\B _L ((s,q,i,m,\bot), \neg, D)= ((s,q,i,m,\bot),L)\\
		\delta ^\B _L ((s,q,i,m,r), \neg, L)= ((s,q,i,m,\overline{r}),U)$ 
	\end{itemize}
	
	\item \textbf{Transitions to evaluate non I/O statements:}
	\begin{itemize}
		\item $ \delta ^\B _{\emptyset} ((s,q,i,m,t),\textbf{skip},D) = ((s,q,i,m,0), U)$
		\item $ \delta ^\B _L ((s,q,i,m,t),assign_b,D) = ((s,q,i,m,\bot),L)\\
		\delta ^\B _L ((s,q,i,m,r),assign_b,L) = ((s[b/r],q,i,m,0),U)$
		\item $ \delta ^\B _{LR} ((s,q,i,m,t),\textbf{if},D)= ((s,q,i,m,\bot),L)\\
		\delta ^\B _{LR} ((s,q,i,m,r),\textbf{if},L) = \begin{cases}((s,q,i,m,0), RL) &,\text{if } r = \top \\
		((s,q,i,m,0), RR) &,\text{otherwise}\end{cases}$
		\item $ \delta ^\B _{LR} ((s,q,i,m,t),\textbf{while},D)\\
		=\delta ^\B _{LR} ((s,q,i,m,t),\textbf{while},R)\\
		= ((s,q,i,m,\bot),L)\\
		\delta ^\B _{LR} ((s,q,i,m,r),\textbf{while},L) = \begin{cases}((s,q,i,m,0), R) &,\text{if } r = \top \\
		((s,q,i,m,0), U) &,\text{otherwise}\end{cases}$
	\end{itemize}
	\item \textbf{Transitions to evaluate input and output:}
	\begin{itemize}
		\item $\delta ^\B _{\emptyset} ((s,q,i,inp,t),\textbf{input }\vec{b},D) = \bigvee _{\vec{val}\in\mb{B}^{N_\mc{I}}}((s[\vec{b}/\vec{val}],q,\vec{val},out,0),U)$
		\item $\delta ^\B _{\emptyset} ((s,q,i,out,t),\textbf{output }\vec{b},D) =\bigvee _{q' \in \delta_{spec}(q,i,s[\vec{b}])} ((s,q',i,inp,1),U)$
	\end{itemize}
	\item \textbf{Transitions to move to next statement in program:}
	\begin{itemize}
		\item $ \delta ^\B _{LR} ((s,q,i,m,t),\textbf{;},D)= ((s,q,i,m,0),L)\\
		\delta ^\B _{LR} ((s,q,i,m,t),\textbf{;},L)= ((s,q,i,m,0),R)\\
		\delta ^\B _{LR} ((s,q,i,m,t),\textbf{;},R)= ((s,q,i,m,0),U) $
		\item $ \delta ^\B _{LR} ((s,q,i,m,t),\textbf{then},L)\\
		=\delta ^\B _{LR} ((s,q,i,m,t),\textbf{then},R)\\
		=((s,q,i,m,t),U)$
		\item $\delta ^\B _{LR} ((s,q,i,m,t),\textbf{if},R)= ((s,q,i,m,0),U)$\\
	\end{itemize}
\end{itemize}
All other transitions evaluate to $false$. The set of accepting states is defined as $$F^\B = \big\{ (s,q,i,m,1) \mid q \in F_{spec} \big\}$$

Note that $\B$ behaves similar to $\A$ during normal execution and that only boolean evaluation was altered. Therefore, the state spaces of the automaton only differs in the states corresponding to boolean evaluation. Additionally, the set of accepting states $F^\A$ and $F^\B$ are equivalent.

\subsubsection{Equivalence Proof}

In this section we prove that $\A$ and $\B$ accept the exact same program trees. We do so by showing first that $\B$ evaluates boolean expressions to the same result as $\A$, but instead of terminating with $true$ or $false$ the automaton walks the tree up again and eventually reaches the node where the boolean evaluation was started from containing the correct result. Combined with the fact that the automata have the same behavior on normal executions we show the equivalence of $\A$ and $\B$.

\begin{lemma}\label{lemma1}
	Given a program tree $\T$ that contains a boolean expression subtree with root $t\in T$. Let $\A$ be in state $(s,v)\in P^\A_{expr}$, $\B$ be in state $(s,q,i,m,\bot)\in P^\B_{expr}$ and both automata are reading program node $t$ coming down from the parent.
	
	$\A$ evaluates for state $(s,1)$ to $true$ and state $(s,0)$ to $false$, if and only if $\B$ moves upwards from node $t$ with state $(s,q,i,m,\top)\in P^\B_{expr}$. Vice versa, $\A$ evaluates for state $(s,1)$ to $false$ and state $(s,0)$ to $true$, if and only if $\B$ moves upwards from node $t$ with state $(s,q,i,m,\bot)\in P^\B_{expr}$.
\end{lemma}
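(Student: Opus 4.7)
My plan is to prove the claim by structural induction on the Boolean expression subtree rooted at $t$. The guiding intuition is that $\A$ performs a guess-and-verify evaluation --- it carries a guessed value $v \in \{0,1\}$ downward and accepts exactly when the guess is consistent with the actual semantics of the expression under $s$ --- while $\B$ performs the dual deterministic bottom-up evaluation, descending with the placeholder $\bot$ and returning upward with the actual truth value encoded as $\top$ or $\bot$. The lemma then just asserts that both procedures agree with the standard semantics of the Boolean subexpression.

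For the base cases I would inspect the transitions for $\textbf{tt}$, $\textbf{ff}$, and a variable $b$ directly. For $\textbf{tt}$, $\delta^\A_\emptyset((s,1),\textbf{tt},D)=true$ and $\delta^\A_\emptyset((s,0),\textbf{tt},D)=false$, while $\delta^\B_\emptyset((s,q,i,m,\bot),\textbf{tt},D)$ moves $U$ with flag $\top$; this matches the two implications of the lemma. The case $\textbf{ff}$ is symmetric, and the variable case $b$ splits on whether $s[b]=1$ or $s[b]=0$, with both automata agreeing by construction.

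For the inductive step I would treat $\neg$ and $\vee$. For $\neg$, $\A$ moves into the left child with the guess flipped to $1-v$, so by the induction hypothesis applied to the left subtree the original guess $v$ is verified iff the subexpression evaluates to $1-v$, i.e.\ iff $\neg$ of it equals $v$; on the $\B$ side, $\B$ enters the left child with $\bot$, by IH returns with $r$ equal to the subexpression's value, and then $\delta^\B_L$ lifts it to $\overline{r}$ on the way up, which is again $\neg$ of the subexpression. For $\vee$, the $\A$ transition is disjunctive when $v=1$ (needs one of the two children to verify with guess $1$) and conjunctive when $v=0$ (needs both children to verify with guess $0$); $\B$ instead visits the left child first, then descends into the right child only if the left returned $\bot$, and finally propagates the disjunction upward. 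A case split on the four Boolean combinations of the two subexpressions' truth values, combined with IH on each subtree, shows that $\A$ accepts under guess $v$ and $\B$ returns $\top$ (resp.\ $\bot$) exactly when $e_1 \vee e_2$ evaluates to $v$ (resp.\ $1-v$).

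The main obstacle is the $\vee$ case, because the two automata realize the semantics in structurally different ways: $\A$ switches between $\wedge$ and $\vee$ in its transition formula depending on the guess, whereas $\B$ uses a short-circuit traversal order independent of any guess. Making the correspondence explicit requires carefully tracking both the direction $d$ that $\B$ is entering a $\vee$-node from (i.e.\ $D$, $L$, or $R$) and the value of $r$ it carries, and matching each traversal path against the corresponding branch of $\A$'s disjunctive or conjunctive transition. Once this bookkeeping is in place for $\vee$, the lemma follows by assembling the base and inductive cases.
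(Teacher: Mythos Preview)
Your proposal is correct and matches the paper's own proof essentially line for line: both argue by structural induction on the Boolean expression, dispatch the leaves $\textbf{tt}$, $\textbf{ff}$, $b$ by direct inspection of the transitions, and handle $\neg$ and $\vee$ via the induction hypothesis together with a case split (the paper splits on the guess $v$, you on the subexpressions' truth values, which is only a cosmetic reorganization of the same case analysis). The ``bookkeeping'' you flag for $\vee$ is exactly what the paper does when it traces $\B$'s short-circuit traversal against $\A$'s disjunctive/conjunctive transition formula.
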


\begin{proof}
	Let $\A$, $\B$ be defined as above, $\T$ be a valid program tree, $t\in T$ be the root of an boolean expression subtree. We proof the statement by structural induction over the boolean expression.\\
	Base-cases: For $\tau(t):$
	\begin{itemize}
		\item $\textbf{tt :}$\\
			$\delta ^\A _{\emptyset}\big( (s,1), \textbf{tt}, D \big) = true$ and  $\delta ^\A _{\emptyset}\big( (s,0), \textbf{tt}, D \big) = false$\\
			$\Leftrightarrow \delta ^\B _{\emptyset}\big( (s,q,i,m,\bot), \textbf{tt}, D \big) = ((s,q,i,m,\top), U)$
		\item $\textbf{ff :}$\\
			$\delta ^\A _{\emptyset}\big( (s,1), \textbf{ff}, D \big) = false$ and  $\delta ^\A _{\emptyset}\big( (s,0), \textbf{ff}, D \big) = true$\\
			$\Leftrightarrow \delta ^\B _{\emptyset}\big( (s,q,i,m,\bot), \textbf{ff}, D \big) = ((s,q,i,m,\bot), U)$
		\item $b\textbf{ :}$\\
		$\begin{array}{ll} \delta ^\A _{\emptyset} ((s,v), b, D) &=\begin{cases} true &,\text{if } s[b]=v \\
		false &,\text{otherwise}\end{cases}\\
		\Leftrightarrow \delta ^\B _{\emptyset} ((s,q,i,m,\bot), b, D) &= \begin{cases}((s,q,i,m,\top), U) &,\text{if } s[b]=1 \\
		((s,q,i,m,\bot), U) &,\text{otherwise}\end{cases} \end{array}$
	\end{itemize}
	Inductive-step: For $\tau(t)=$
	\begin{itemize}
		\item $\neg\textbf{ :}$\\
		By definition, $\A$ continues with $(s,1-v)$ and $\B$ continues with $(s,q,i,m,\bot)$ in the left subtree. By induction, $\B$ returns from the left subtree with state $(s,q,i,m,r)$ for $r=\top/\bot$ and by definition sends $(s,q,i,m,\overline{r})$ upwards, where $r$ matches the acceptance of $(s,v-1)$ and $\overline{r}$ with $(s,v)$.
		\item $\vee\textbf{ :}$\\
		By definition, $\B$ continues with $(s,q,i,m,\bot)$ in the left subtree.
		We make a case distinction based on $v$.\\
		Either $v = 1$: Then by definition, $\A$ evaluates to true, iff $(s,1)$ if either the left or right subtree evaluates to $true$.
		Assuming the left subtree evaluates to $true$, then by induction $\B$ returns from the left subtree with state $(s,q,i,m,\top)$ and by definition sends the same state upwards. For the other case, the right subtree evaluates to $true$ for $(s,1)$. By induction $\B$ first returns with $(s,q,i,m,\bot)$ from the left subtree and by definition sends this state into the right subtree. Again by induction, $\B$ returns with state $(s,q,i,m,\top)$ from the right subtree and by definition sends this state upwards.\\
		Or $v = 0$: Then by definition, $\A$ evaluates to $true$, iff $(s,0)$ in both the left and right subtree evaluate to $true$. By induction, $\B$ returns with copy $(s,q,i,m,\bot)$ from the right subtree and by definition sends this copy into the right subtree. Again by induction, $\B$ returns with $(s,q,i,m,\bot)$ and by definition sends it upwards.
	\end{itemize} \qed
\end{proof}

\begin{theorem}
	$\mathcal{L}(\A) = \mathcal{L}(\B)$.
\end{theorem}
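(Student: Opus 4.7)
\medskip\noindent\textbf{Proof plan.}
The plan is to prove both inclusions by translating accepting run trees between the two automata, exploiting two structural facts: (i) by inspection of the two explicit transition definitions, $\A$ and $\B$ are identical on all non-Boolean-evaluation states, i.e., on states in $P_{exec}$ their transitions coincide (for \textbf{skip}, \textbf{;}, \textbf{then}, \textbf{if}/$R$, \textbf{while}/$R$, \textbf{input}, and \textbf{output}); and (ii) both automata share the same accepting set $F^\A = F^\B = \{(s,q,i,m,1)\mid q\in F_{spec}\}\subseteq P_{exec}$, so expression states contribute nothing to the B\"uchi condition. Hence it suffices to show that, on every program tree $\T$, the two automata admit run trees whose $P_{exec}$-projected branches coincide as infinite sequences.

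\medskip\noindent\textbf{From $\A$ to $\B$.}
Given an accepting run tree $\T_R^\A$ of $\A$ on $\T$, I would build a corresponding run tree $\T_R^\B$ of $\B$ by induction on the depth of each branch, maintaining the invariant that whenever a branch of $\T_R^\A$ reaches an execution configuration $(s,q,i,m,t)$ at some program node $u\in T$ coming from direction $d$, the corresponding branch of $\T_R^\B$ is in exactly the same configuration at $(u,d)$. The invariant is preserved trivially at all $P_{exec}$-transitions, because $\delta^\A$ and $\delta^\B$ agree there. At an execution step requiring a Boolean value (i.e., at \textbf{if}, \textbf{while}, or $assign_b$ entering from $D$ or $R$), $\A$ universally spawns a verification copy $((s,v),L)$ down the expression subtree and continues execution using the guessed value $v\in\{0,1\}$; acceptance of $\T_R^\A$ forces the verification branch to return $true$, which by \Cref{lemma1} means that the $(s,v)$-copy would evaluate to $true$, so symmetrically $\B$, starting $(s,q,i,m,\bot)$ down the same subtree, returns upward in state $(s,q,i,m,r)$ with $r=\top\Leftrightarrow v=1$. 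Since $\B$'s transitions at \textbf{if}, \textbf{while}, $assign_b$ after the deterministic detour continue exactly as $\A$ does under the guessed value, the invariant is restored. The $P_{exec}$ projection of every branch of $\T_R^\B$ therefore equals that of the corresponding branch of $\T_R^\A$, and acceptance carries over because $F^\A=F^\B$ and only $P_{exec}$-states contribute.

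\medskip\noindent\textbf{From $\B$ to $\A$.}
Conversely, given an accepting run tree $\T_R^\B$ on $\T$, I would construct $\T_R^\A$ by the same invariant: at every Boolean-evaluation detour of $\B$, record the value $r\in\{\top,\bot\}$ that $\B$ returns with, and in $\A$ take the non-deterministic branch corresponding to $v=1$ if $r=\top$ and $v=0$ if $r=\bot$; by the converse direction of \Cref{lemma1}, the universally required verification copy $((s,v),L)$ then evaluates to $true$, so the split is accepting. Execution transitions are copied verbatim. This produces an $\A$-run tree whose $P_{exec}$-projected branches match those of $\T_R^\B$, hence accepting because $F^\A = F^\B$.

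\medskip\noindent\textbf{Expected difficulty.}
The conceptual content is contained in \Cref{lemma1}; the remaining work is bookkeeping. The main obstacle will be giving a clean coinductive (or branch-by-branch) definition of the correspondence between run trees when the underlying branching structures differ in shape (one universal copy per Boolean evaluation in $\A$ versus one deterministic downward walk in $\B$), while still carrying both the non-determinism shared by the two automata (from \textbf{input} and from $\delta_{spec}$ at \textbf{output}) and the universal branching of $\A$ through the argument uniformly. Once the branch-to-branch map is set up with the execution-configuration invariant, the acceptance transfer is immediate because the $P_{exec}$-projected branches agree pointwise and $F^\A = F^\B$.
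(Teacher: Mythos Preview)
Your proposal is correct and follows essentially the same approach as the paper: both directions are proved by maintaining an execution-configuration invariant along runs and invoking \Cref{lemma1} at each Boolean-evaluation point, then concluding via $F^\A=F^\B\subseteq P_{exec}$.

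One simplification you are missing, which the paper exploits and which dissolves your ``expected difficulty'': there is no genuine branch-by-branch bookkeeping needed. In any accepting run tree of $\A$, the universal copies sent into Boolean expressions are all \emph{finite} (they terminate to $true$), so every accepting run tree of $\A$ contains exactly \emph{one} infinite path, consisting entirely of $P_{exec}$-states. Likewise, since $\B$ has no universal choices at all, each of its run trees is already a single infinite path. Thus the correspondence reduces to matching one infinite $P_{exec}$-sequence against another, with the finite Boolean detours handled locally by \Cref{lemma1}; no coinductive branch map is required.
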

\begin{proof}
	$"\Implies":$\\
	Let $\T\in\mathcal{L}(\A)$ be a program tree with accepting run tree $\R_\A = \langle R_\A, \tau_\A \rangle$. The initial state $p_0$ of $\A$ is a normal execution state ($p_0 \in P_{exec}$). The transitions of $\A$ are defined such that whenever $\A$ is in a normal execution state $p\in P_{exec}$ exactly one successor is again a normal execution state $p'\in P_{exec}$. There can be an additional successor state that is used to evaluate a boolean expression but those subtrees of the accepting run graph are finite and terminate to $true$ on every path. Therefore there only exists a single infinite path $$r_0r_1\hdots = (t_0, p_0, d_0)(t_1, p_1, d_1)\hdots \in \big(T\times P_{exec}\times\{L,R,D\}\big)^\omega$$ in the accepting run tree $\R_\A$ that satisfies the Büchi acceptance condition.
	
	We show that $\B$ visits the same states $p_0p_1\hdots\in P_{exec}^\omega$ in the same order while traversing $\T$, that is whenever $\A$ is in state $p_i\in P_{exec}$ reading a node $t_i\in\T$ moving to node $t_{i+1}$ with state $p_{i+1}$, $\B$ in the same state reading the same node eventually moves to node $t_{i+1}$ with state $p_{i+1}$. We show this with a case analysis over the possible labels $\tau_\A(t_i)$. In all three interesting cases $r_i$ has two children, namely $r_{i+1}$ and $(t',p',L)$ with $p'=(s,v)\in P^\A_{expr}$.
	\begin{itemize}
		\item $assign_b$: Since the run tree is accepting, the boolean evaluation evaluates to $true$. With \cref{lemma1} we know that $\B$ traverses the boolean expression subtree and returns with state $(s,q,i,m,r)$ to $t_i$, where $r=\top/\bot$ for $v=1/0$, respectively. By definition $\B$ now sends $(s[b/v],q,i,m,0)$ upwards yielding $r_{i+1}$.
		\item $\textbf{if}$: Since the run tree is accepting, the boolean evaluation evaluates to $true$. With \cref{lemma1} we know that $\B$ traverses the boolean expression subtree and returns with state $(s,q,i,m,r)$ to $t_i$, where $r=\top/\bot$ for $v=1/0$, respectively. Depending on $v$ having the value $0$ or $1$ the only valid move for $\B$ is either $((s,q,i,m,0),RR)$ or $((s,q,i,m,0),RL)$, respectively. Either way it matches the move of $\A$ yielding $r_{i+1}$.
		\item $\textbf{while}$: Since the run tree is accepting, the boolean evaluation evaluates to $true$. With \cref{lemma1} we know that $\B$ traverses the boolean expression subtree and returns with state $(s,q,i,m,r)$ to $t_i$, where $r=\top/\bot$ for $v=1/0$, respectively. Depending on $v$ having the value $0$ or $1$ the only valid transition for $\B$ is either $((s,q,i,m,0),U)$ or $((s,q,i,m,0),R)$, respectively. Either way it matches the move of $\A$ yielding $r_{i+1}$.
	\end{itemize}
	All other valid transitions are equally defined such that $\B$ can make the same move.
	
	Since $\B$ visits for every accepted run tree the same normal execution states equally often and the set of accepting states for $\A$ and $\B$ are equivalent and a subset of normal execution states, $\B$ accepts the same program trees as $\A$.
	
	$"\Leftarrow":$\\
	Let $\T\in\mathcal{L}(\B)$ be a program tree with accepting run tree $\R_\B = \langle R_\B, \tau_\B \rangle$. The initial state $p_0$ of $\B$ is a normal execution state ($p_0 \in P_{exec}$). Since $\B$ has no universal choices the run tree is 1-ary, that is a single infinite path $$r_0r_1\hdots=(t_0,p_0,d_0)(t_1,p_1,d_1)\hdots\in\big(T\times P^\B\times \{L,R,D\} \big)^\omega$$
	
	We show that $\A$ can visit the same normal execution states as $\B$, more precisely: Let $\B$ be in state $p_i\in P_{exec}$ reading a node $t_i\in\T$. If $\B$ moves to node $t_{i+1}$ with state $p_{i+1}\in P_{exec}$, then $\A$ can also perform a valid move from $t_i$ with state $p_i$ to $t_{i+1}$ with state $p_{i+1}$.\\ If $\B$ moves to a state $p_{i+1}\in P^\B_{expr}$ and traverses the boolean expression subtree and therefore visits the states $p_{i+2}p_{i+3}\hdots p_{i+j-1}\in(P^\B_{expr})^*$ until it returns to node $t_i(=t_{i+j})$ with state $p_{i+j}\in P^\B_{expr}$ and subsequently moves to node $t_{i+j+1}$ with state $p_{i+j+1}\in P_{exec}$, then $\A$ can also perform a valid move from $t_i$ with state $p_i$ to node $t_{i+j+1}$ with state $p_{i+j+1}$.
	
	In the first case, where $\B$ moves from normal execution state to another normal execution state $\A$ can always do the same move since those transitions are equally defined. We show the second case with a case analysis over the possible labels $\tau_\B(t_i)$. Let $p_i = (s,q,i,m,t)\in P_{exec}$, $t_i \in \T$, $p_{i+1} = (s,q,i,m,\bot)\in P^\B_{expr}$ and $p_{i+j}=(s,q,i,m,r))$.
	\begin{itemize}
		\item $assign_b$: In this case $\B$ moves from $p_{i+j}$ to $(s[b/v],q,i,m,0)$. Depending on $v$, $\A$ moves from $t_i$ with state $p_i$ into $t_{i+j+1}$ with state $(s[b/v],q,i,m,0)$ and universally into $t_{i+1}$ with state $(s,v)$, thus satisfying the transition relation. The second copy send into the boolean expression terminates to $true$ based on \cref{lemma1}.
		\item $\textbf{if}$ and $\textbf{while}$: $\A$ moves from $t_i$ with state $p_i$ into $t_{i+j+1}$ with state $p_{i+j+1}$ and depending on $v$ universally into $t_{i+1}$ with state $(s,v)$. Again with \cref{lemma1} the second copy send into the boolean expression terminates to $true$.
	\end{itemize}
	Since $\A$ visits the same normal execution states as $\B$ in the same order and the set of accepting states are equivalent and is a subset of normal execution states, $\A$ fulfills the Büchi acceptance condition. \qed
\end{proof}

\subsection{Proof of \cref{generalStreett}}\label{generalStrettProof}
\textit{For a Streett acceptance condition $Acc=\text{Streett}(F)$ with set of tuples of states $F\subseteq 2^{Q\times Q}$ and a run graph $\G=(V,v_0,E,f)$:\\
If $\G$ satisfies $Acc$, then there exists a valid $|V|$-bounded annotation function $\lambda$ for each basic comparison relation in $\triangleright^F_S$.}
\begin{proof}
	Let $F=\{(A_i,G_i)\}_{i\in[k]}$ be a set of tuples of states, $Acc=\text{STREETT}(F)$ an acceptance condition and $\G=(V,v_0,E,f)$ a run graph satisfying $Acc$.
	
	We construct the $k$ different valid $c$-bounded annotation functions $\lambda_i$. We first define $\lambda_i(v) = 0$ for all unreachable states $v\in V$ and then remove those states from $\G$. We then define each $\lambda_i(v)$ for all $i\in [k]$ with:
	\begin{itemize}
		\item $\forall v\in V$ with $f(v)\in G_i: \lambda_i(v)=0$.
		\item Remove outgoing edges for states $v\in V$ with $f(v)\in G_i$. That results in $\G'=(V,E',f)$, where $E'=E\setminus\{ (v,v')\in E\mid f(v)\in G_i\}$. $\G'$ contains no \textit{strongly connected components} (SCCs) that have a vertex $v$ with $f(v)\in A_i\cup G_i$. For $G_i$ it is obvious since all outgoing edges are removed. For $A_i$: Assume there is an SCC containing a state $v$ where $f(v)\in A_i$, then there exists an infinite path $\pi$ inside this SCC that infinitely often visits $A_i$ without infinite visits to the corresponding set $G_i$ and therefore $f(\pi)\not\in Acc$ which contradicts the assumption.
		\item Now let $\mc{S}$ be the set off all SCCs of $\G'$. We define $\G''=(V'',E'',f)$. Where $V''=\mc{S}\cup\{ \{v\}\mid v\not\in \bigcup_{S\in\mc{S}}S \}$ and $E''=\{ (S_1,S_2)\mid \exists v_1\in S_1, v_2\in S_2: S_1\not=S_2 \wedge (v_1,v_2)\in E' \}$.\\
		$\G''$ is a directed acyclic graph and therefor all paths in $\G''$ are of finite length and there only exist finitely many paths.
		\item We define $nb_i(\pi):(V'')^*\rightarrow \mb{N}$ as the max number of bad states visited on some path $\pi=S_1,\hdots,S_m$. Let $Occ(\pi)$ denote the set of occurring SCCs in $\pi$. We formally define $nb_i(\pi)=\big|\bigcup_{S\in Occ(\pi)}S \cap \{ v\in V\mid f(v)\in A_i \}\big|$.
		\item $\forall v\in S \in V''$ with $f(v)\not\in G_i: \lambda_i(v)=\max(\{nb_i(\pi)\mid \pi \text{\textit{ is a path from }}S\})$.
	\end{itemize}\qed
\end{proof}

\end{document}